\theoremstyle{definition}
\newtheorem*{cor}{Corollary}
\newtheorem*{defi}{Definition}
\newtheorem*{note}{Note}
\newtheorem*{lemma}{Lemma}
\newtheorem*{notation}{Notation}
\newtheorem*{prop}{Proposition}
\newtheorem*{thm}{Theorem}
\newtheorem{nthm}{Theorem}[section]
\newtheorem{nlemma}[nthm]{Lemma}
\newcommand{\Tau}{\mathrm{T}}
\newenvironment{bprooftree}
  {\leavevmode\hbox\bgroup}
  {\DisplayProof\egroup}
\title{Hypergraph rewriting and Causal structure of $\lambda-$calculus}
\author{Utkarsh Bajaj}
\affil{University of Waterloo}
\date{\today}
\begin{document}

\maketitle

\begin{abstract}
\noindent In this paper, we first study hypergraph rewriting in categorical terms in an attempt to define the notion of events and develop foundations of causality in graph rewriting. We introduce novel concepts within the framework of double-pushout rewriting in adhesive categories. Secondly, we will study the notion of events in $\lambda-$calculus, wherein we construct an algorithm to determine causal relations between events following the evaluation of a $\lambda-$expression satisfying certain conditions. Lastly, we attempt to extend this definition to arbitrary $\lambda-$expressions. 
\end{abstract}

\section{Introduction}
In computational systems, a fundamental question arises regarding the relationship between two events: specifically, how one event may influence or cause the other. The concept of causality in computer science was initially explored by Glynn Winskel \cite{EventStructures}, who introduced an abstract framework for characterizing events and the causal relationships between them. Since that time, significant advancements have been made in understanding the structures that underlie causality. The significance of causality is also evident in dynamical systems, as highlighted in Einstein's general theory of relativity, where the causal structure of a Lorentzian manifold uniquely determines the geometry of spacetime, modulo a scaling factor. Moreover, areas such as causal set theory in theoretical physics offer additional insights into this complex subject. Despite these developments, explicit descriptions of events and their causal relationships within specific computational systems have not been as extensively studied. In this paper, we begin by examining the notion of events and the causal relations that arise between successive events in the context of hypergraph rewriting, framed within categorical terms. For those seeking further exploration of hypergraph rewriting, the Wolfram Physics Project \cite{WPP} serves as an excellent resource. Notably, several tools related to hypergraph rewriting are available in the Wolfram Language, as demonstrated in \cite{MultiwayCode}.
\newline
\\
\noindent We start with a definition of a hypergraph.  A directed hypergraph is a finite set of vertices $V$ and edges $E$ (which can be infinite), where an edge can connect a finite number of vertices i.e. $E \subseteq \bigcup_{n \in \mathbb{N}} V^{n}$. A more helpful notion of hypergraphs comes from considering the edges as \textit{labelled}, which gives way to the following definition:
\begin{defi}[Directed Multihypergraph]
     A labelled directed multihypergraph is a finite set of vertices $V$ and a (not-necessarily finite) set of edges $E$ with a map $f : E \rightarrow  \bigcup_{n \in \mathbb{N}} V^{n}$.
\end{defi}
\noindent Note that multihypergraphs are included in the above definition as $f$ does not have to be injective. Also, we include the empty graph $\emptyset$ as a directed multihypergraph. From here on, we will use graph, hypergraph, and directed (multi)hypergraph interchangeably. Figure \ref{hypergraphsexamples} show some visual representations of hypergraphs (examples taken from \cite{class}). 
\begin{figure}
    \centering
    \includegraphics[scale = 0.28]{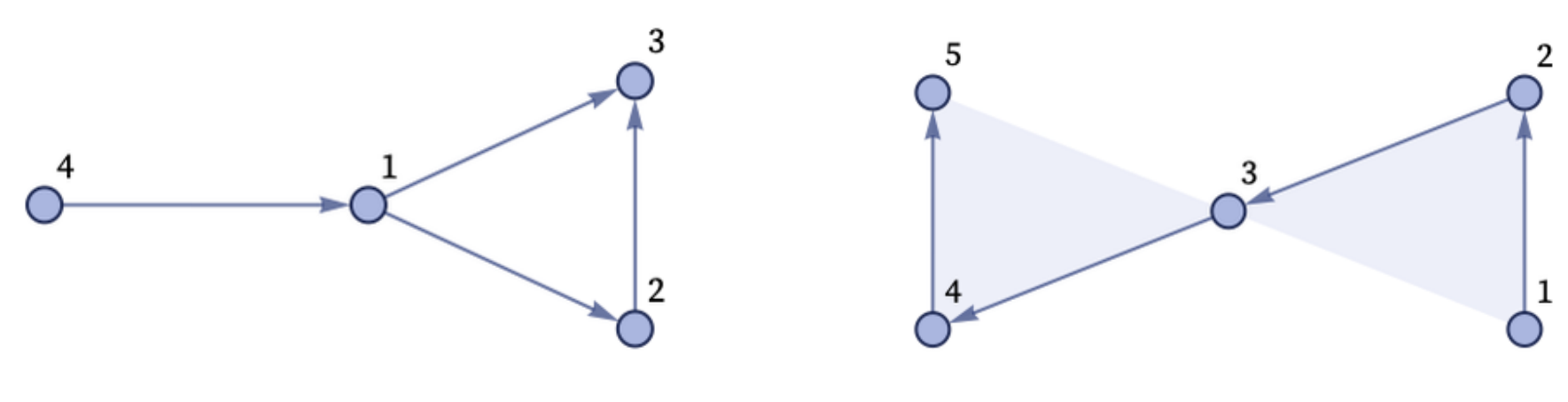}
    \caption{The edge lists are 
    \{(1,2),(2,3),(1,3)\} and  \{(1,2,3),(3,4,5)\} respectively} 
    \label{hypergraphsexamples}
\end{figure}
\noindent
Here, the edges connecting more than 2 vertices are shaded. There are other definitions of hypergraphs in the literature \cite{Gallo}, however, we will use the one presented above. In an undirected hypergraph, $E \subseteq \mathcal{P}(V) \setminus \emptyset$. A hypergraph rewriting system can be specified by an initial hypergraph and a collection of update rules. 

\begin{defi}[Update rule]
    An \textit{``update rule"} is a rewrite rule of the form $H_{1} \rightarrow H_{2}$ where $H_{1}$, $H_{2}$ are hypergraphs.
\end{defi}

\noindent
For instance, figure \ref{fig:2} is an example of an update rule.
\begin{figure}
    \centering
    \includegraphics[scale = 0.3]{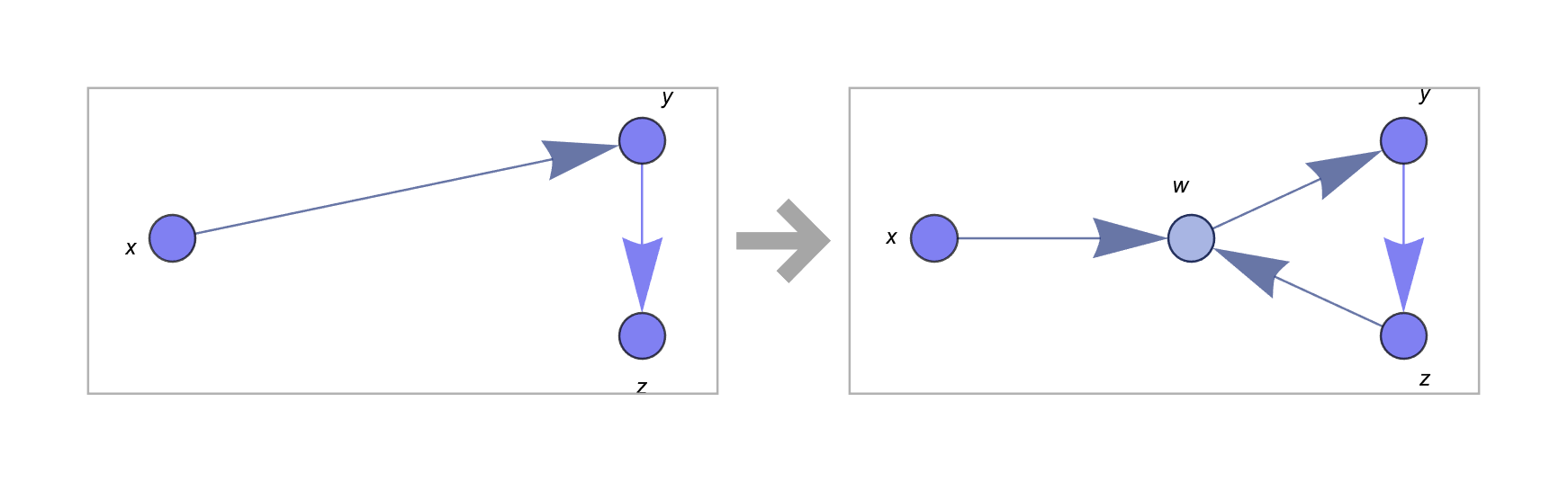}
    \caption{Example taken from \cite{class}. Here, the vertex which is shaded light blue on the right hand side graph is newly created} 
    \label{fig:2}
\end{figure}
\noindent
An update rule $H_{1} \rightarrow H_{2}$ is applied to a hypergraph $H$ by finding a subgraph of $H$ isomorphic to $H_{1}$, removing the vertices and edges deleted by the rule, and then gluing in $H_{2}$ along the vertices and edges preserved by the rule. Figure (\ref{fig:3}) illustrates how the rule in fig.(\ref{fig:2}) may be applied to some initial graph.
\begin{figure}
    \centering
    \includegraphics[scale = 0.4]{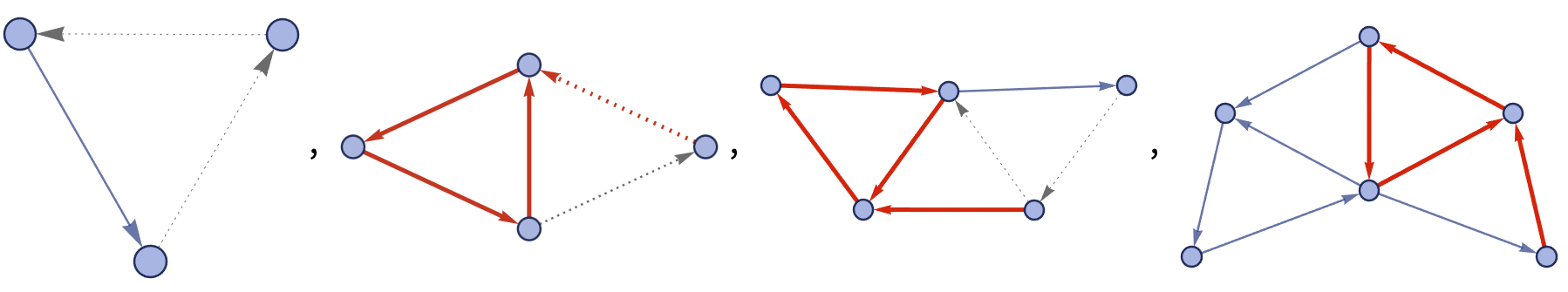}
    \caption{The dotted edges represent the part of the graph being removed. And the red edges represent the new edges added}
    \label{fig:3}
\end{figure}
\noindent We can describe operation of applying a rule in set theoretic terms. We first define some objects to make notation easier. To begin with, we can denote $\bigcup_{n \in \mathbb{N}} V^n$ as $V^{+}$ for any set $V$, the kleene plus operator on a finite set, which can be thought of as a covariant endofunctor $K^{+} : \mathbf{Set} \rightarrow \mathbf{Set}$, $V \mapsto V^{+}$. We can also define what a \textit{morphism} between 2 hypergraphs is:

\begin{defi}[Morphism] A morphism $h$ between hypergraphs $(V_{1}, E_{1}, f_{1})$ and $(V_{2}, E_{2},f_{2})$ consists of maps $h_V : V_{1} \rightarrow V_{2}$, $h_E : E_{1} \rightarrow  E_{2}$ such that the following diagram commutes:
\begin{center}
    \begin{tikzcd}
E_{1} \arrow[r, "f_{1}"] \arrow[d, "h_{E}"'] & K^{+}(V_{1}) \arrow[d, "K^{+}(h_V)"] \\
E_{2} \arrow[r, "f_2"']                      & K^{+}(V_{2})                
\end{tikzcd}
\end{center}
\end{defi}
\noindent 
This definition reduces the ordinary notion of graph morphisms, where an edge $e \in V \times V$. An update rule is of the form $L \rightarrow R$, where both $L = (V_{L}, E_{L}, f_{L})$ and $R = (V_{R}, E_{R}, f_{R})$ are hypergraphs. In our interpretation of a rewrite rule, the new vertices and edges are the ones with new names. To avoid a lot of unnecessary notation, we can formalize the notion of a rewrite rule:
\begin{defi}[Rewrite rule]
    A rewrite rule consists of graphs $I, L, R$ with monomorphisms (injective graph morphisms) $L \overset{l}{\hookleftarrow} I \overset{r}{\hookrightarrow} R$.
\end{defi}
\noindent
Here, the graph $I$ is called the interface graph, specifying the part of the left-hand side that is conserved by the rule. All the remaining vertices and edges are deleted. Then, after deletion, the graph $R$ is ``glued" along the graph $I$. For an update rule to be applied to some hypergraph $H = (V,E,f)$, we need a monomorphism $m : L \rightarrow H$ (called a \textit{matching}) which identifies the subgraph $m(L)$ to which the rule will be applied (the monomorphism condition is used because $L$ must be a subgraph of $H$).

% Let $D_{V} \subseteq V_{L}$ denote the subset of vertices deleted by the rule. We then delete $m_V(D_{V}) \subseteq V$ from $H$ based on the definition below

\begin{defi}[Deletion of vertices]
    The graph formed by deleting a subset $S \subseteq V$ is $(V \setminus S, E', h')$ where $E' = \{e \in E : h(e) \in K^+(V \setminus S) \}$, and $h' = h \vert_{E'}$. This is denoted as $G \setminus S$.
\end{defi}
\noindent First, we delete the vertices $m_V(L \setminus I)$. So we get $H \setminus m_V(L \setminus I) = (V',E',f')$. In doing so, we have deleted a subset of edges deleted by the rule. We then delete the remaining edges $m_E(L \setminus I)$ to get the new edge set $\Tilde{E}$, and the new $\Tilde{f} = f'\vert_{\Tilde{E}}$. 

%\noindent We get $H \setminus m_{V}(D_{V}) = (V', E', f')$. Let $D_{E} \subseteq E_{L}$  be the set of edges deleted by the rule. We construct $\Tilde{E} = E' \setminus D_{E}$, $\Tilde{f} = f'\vert_{\Tilde{E}}$, and $\Tilde{V} = V'$. In order for the rule to be valid, we require that $E_{L} \setminus D_{E} \subseteq E'$.  

\begin{defi}[Cut graph]
    The graph $\Tilde{H}$ formed after deleting the vertices and edges, $(\Tilde{V}, \Tilde{E}, \Tilde{f})$ is called the \textit{cut graph}.
\end{defi}
\noindent We will show later that the cut graph construction obeys a universal property in the category of directed multi-hypergraphs. Note that by construction, $\Tilde{m} = m\vert_{I}$ is a morphism from $I$ to $\Tilde{H}$. Now, to get the output of the rewrite rule application, we can glue the cut graph $\Tilde{H}$ with $R$ along $I$. Formally, we construct the graph $(\Tilde{V} \sqcup V_{R} / \sim, \Tilde{E} \sqcup E_{R} / \sim, F)$ where $\Tilde{m}_{V}(v) \sim r_{V}(v)$ and $\Tilde{m}_{E}(e) \sim r_{V}(e)$, and $F : \Tilde{E} \sqcup E_{R} / \sim \rightarrow K^{+}(\Tilde{V} \sqcup V_{R} / \sim)$ takes $[\Tilde{e}] \mapsto [\Tilde{f}(\Tilde{e})]$ and $[e_{R}] \mapsto [f_{R}(e_{R})]$, where $\Tilde{e}$ is in $\Tilde{E}$ and $e_{R}$ is in $E_{R}$. Note that $K^{+}(\Tilde{V} \sqcup V_{R} / \sim) \cong K^{+}(\Tilde{V} \sqcup V_{R}) / \sim$ where $(v_{1},...,v_{n}) \sim (v'_{1},...,v'_{n})$ if $v_{i} \sim v'_{i}$ for all $1 \leq i \leq n$. It is easy to check that $F$ is well-defined. 
\newline \\
\noindent We will show in the next section that all of the above can be written elegantly in category-theoretic terms, which is called \textit{double-pushout rewriting} in the literature. An excellent explanation can be found in \cite{kissingerthesis}. 
\section{Categorical formulation of Hypergraph rewriting}
\noindent Let $\mathcal{H}$ denote the category of directed multihypergraphs with morphisms as defined in the previous section. If $f : H_{1} \rightarrow H_{2}, g : H_{2} \rightarrow H_{3}$ are morphisms, we can describe its composition, $g \circ f$, as the composition of the vertex and edge maps. This is indeed a morphism because $K^+$ is a functor i.e. $K^+(f \circ g) = K^+(f) \circ K^+(g)$. It is easy to see that the terminal object in this category is $(\{1\},\{e_1,e_2,..\}, f)$ where $f(e_n) = (1,1,..1)$, consisting of $n$ 1s. Also, the initial object is the empty graph $\emptyset$.
\begin{notation}
    Let $(a_1,...,a_n)$ and $(b_1,...,b_n)$ be two $n-$tuples. Then $(a_1,...,a_n) \times (b_1,...,b_n) := ((a_1, b_1),...,(a_n,b_n))$. This is not defined for two tuples which are not of the same length.
\end{notation}
\begin{prop} $\mathcal{H}$ has pullbacks.
\begin{center}
\begin{tikzcd}
J \arrow[rd, "\gamma" description, dashed] \arrow[rdd, "q_J"', bend right] \arrow[rrd, "p_J", bend left] &                                                             & \\ & H_1 \times_{\mathcal{H}} H_2 \arrow[r, "p"] \arrow[d, "q"'] & H_1 \arrow[d, "g"] \\& H_2 \arrow[r, "f"']  & H                 
\end{tikzcd}
\end{center}
    \begin{proof}
    Let $g \in \text{Mor}_{\mathcal{H}}(H_{1}, H)$, $f \in \text{Mor}_\mathcal{H}(H_{2}, H)$. Let $H_1 = (V_1, E_1, h_1)$ and $H_2 = (V_2, E_2, h_2)$. Then, construct $H_1 \times_{\mathcal{H}} H_2 = (V_F, E_F, H)$ where $V_F = \{(v_1,v_2) : g_V(v_1) = f_V(v_2)\}$, $E_F = \{(e_1, e_2) : e_1 \in E_1, e_2 \in E_2, g_E(e_1) = f_E(e_2) \}$, and $H : E_F \rightarrow K^{+}(V_F)$ is given by $(e_1,e_2) \rightarrow h_1(e_1) \times h_2(e_2)$ (since $g$ is a morphism and $g_E(e_1) = f_E(e_2) \implies h(g_E(e_1)) = K^+(g_V)(h_1(e_1)) = K^+(g_V)(h_2(e_2)) = h(g_E(e_2))$, $h_1(e_1)$ and $h_2(e_2)$ must have the same length as tuples). $p_V, p_E$ (and $q_V, q_E$) are the canonical projection maps. $p$ is a morphism because $h_1(p_E(e_1,e_2)) = h_1(e_1) = K^+(p_V)(h_1(e_1) \times h_2(e_2)) = K^+(p_V)(H(e_1,e_2))$. Similarly with $q$. The diagram above commutes obviously. Assume that there is some other $J = (V_J, E_J, j)$ with morphisms $p_J : J \rightarrow H_1, q_J : J \rightarrow H_2$ such that $g \circ p_J = f \circ q_J$. Then construct $\gamma : J \rightarrow H_1 \times_{\mathcal{H}} H_2$ where $\gamma_V(v) = (p_{J_V}(v), q_{J_V}(v)), \gamma_E(e) = (p_{J_E}(e), q_{J_E}(e))$. This is the unique map that makes the diagram commute. This is also a morphism because $H(\gamma_E(e)) = h_1(p_{J_E}(e)) \times h_2(q_{J_E}(e))$, and $h_1(p_{J_E}(e)) = K^+(p_{J_V})(j(e))$, $h_2(q_{J_E}(e)) = K^+(q_{J_V})(j(e))$. By definition, $K^+(\gamma_V)(v_1,...,v_n) = (\gamma_V(v_1),...,\gamma_V(v_n)) = (p_{J_V}(v_1),...,p_{J_V}(v_n)) \times (q_{J_V}(v_1),...,q_{J_V}(v_n))$, so $H(\gamma_E(e)) = K^+(\gamma_V)(j(e))$.
    \end{proof}
\end{prop}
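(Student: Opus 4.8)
The plan is to construct the pullback by first forming it on the underlying vertex and edge sets, where pullbacks certainly exist in $\mathbf{Set}$, and then equipping the resulting object with a hyperedge map that makes the coordinate projections into genuine hypergraph morphisms. Concretely, given $g : H_1 \to H$ and $f : H_2 \to H$ with $H_1 = (V_1, E_1, h_1)$, $H_2 = (V_2, E_2, h_2)$, and writing $h$ for the structure map of $H$, I would set $V_F = \{(v_1,v_2) : g_V(v_1) = f_V(v_2)\}$ and $E_F = \{(e_1,e_2) : g_E(e_1) = f_E(e_2)\}$, the ordinary set-theoretic pullbacks of the vertex and edge maps, and take $p,q$ to be the restrictions of the coordinate projections.

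The crux of the argument, and the step I expect to require the most care, is defining the structure map $H : E_F \to K^+(V_F)$ of the pullback and verifying it is well-defined; the difficulty is precisely that $K^+$ need not preserve pullbacks, so $K^+(V_F)$ is not tautologically the fibre product of $K^+(V_1)$ and $K^+(V_2)$ over $K^+(V)$. The point to exploit is a length-matching phenomenon: for a pair $(e_1,e_2) \in E_F$ we have $g_E(e_1) = f_E(e_2) =: \bar e$, and the morphism squares for $g$ and $f$ give $K^+(g_V)(h_1(e_1)) = h(\bar e) = K^+(f_V)(h_2(e_2))$. Since $K^+$ of a map preserves tuple length, both $h_1(e_1)$ and $h_2(e_2)$ are tuples of the same length as $h(\bar e)$, so their componentwise product $h_1(e_1) \times h_2(e_2)$ is defined; moreover the displayed equality forces each coordinate pair to satisfy $g_V = f_V$, hence to lie in $V_F$, so the product genuinely lands in $K^+(V_F)$. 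I would then define $H(e_1,e_2) = h_1(e_1) \times h_2(e_2)$.

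With the object in hand, checking that $p$ and $q$ are morphisms is a direct computation: applying $K^+(p_V)$ to $h_1(e_1) \times h_2(e_2)$ recovers $h_1(e_1) = h_1(p_E(e_1,e_2))$, and symmetrically for $q$, so both morphism squares commute, while commutativity of the pullback square itself is immediate from the defining equations of $V_F$ and $E_F$. Finally, for the universal property I would take an arbitrary cone $J = (V_J, E_J, j)$ with $p_J : J \to H_1$, $q_J : J \to H_2$ satisfying $g \circ p_J = f \circ q_J$, and define the mediating map by $\gamma_V(v) = (p_{J_V}(v), q_{J_V}(v))$ and $\gamma_E(e) = (p_{J_E}(e), q_{J_E}(e))$; these land in $V_F$ and $E_F$ exactly because the cone commutes, uniqueness is forced by the projections, and the remaining check that $\gamma$ respects structure maps reduces to the identity $H(\gamma_E(e)) = K^+(\gamma_V)(j(e))$, which I would verify by expanding both sides into their two tuple coordinates using that $p_J$ and $q_J$ are themselves morphisms. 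The only genuine subtlety throughout is the interaction with $K^+$, which the length-matching observation resolves cleanly.
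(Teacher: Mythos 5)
Your proposal is correct and follows essentially the same construction as the paper: set-theoretic pullbacks on vertices and edges, the componentwise tuple product $h_1(e_1) \times h_2(e_2)$ as the structure map justified by length-matching, coordinate projections, and the mediating map $\gamma$ built from the cone's components. If anything, you are slightly more careful than the paper in explicitly verifying that each coordinate pair of the product lands in $V_F$, a point the paper leaves implicit.
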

\noindent Since this category has pullbacks and a terminal object, finite limits exist \cite{MacLane}. Also, note that $f : H_1 \rightarrow H_2$ is a monomorphism iff $f_V$ and $f_E$ are injective as set maps.

\begin{prop}
$\mathcal{H}$ has pushouts.
\end{prop}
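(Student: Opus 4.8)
The plan is to construct pushouts componentwise, building on the gluing construction sketched in the introduction. Given a span $H_1 \xleftarrow{g} H_0 \xrightarrow{f} H_2$ with $H_i = (V_i, E_i, h_i)$, I would first form the pushouts of the underlying vertex and edge sets in $\mathbf{Set}$: set $V_P = (V_1 \sqcup V_2)/\!\sim$ with $g_V(v) \sim f_V(v)$ for $v \in V_0$, and $E_P = (E_1 \sqcup E_2)/\!\sim$ with $g_E(e) \sim f_E(e)$ for $e \in E_0$. Write $\iota_1, \iota_2$ for the vertex coprojections and $\kappa_1, \kappa_2$ for the edge coprojections. Since $V_1, V_2$ are finite, $V_P$ is a quotient of a finite set and hence finite, so the vertex-finiteness condition of a hypergraph is preserved (there is no finiteness constraint on edges). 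The candidate pushout is then $P = (V_P, E_P, h_P)$ for a structure map $h_P$ to be defined.

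The crux is to define $h_P : E_P \to K^{+}(V_P)$, and this is where the argument has real content. I would obtain it from the universal property of the set-pushout $E_P$: the two maps $K^{+}(\iota_1) \circ h_1 : E_1 \to K^{+}(V_P)$ and $K^{+}(\iota_2) \circ h_2 : E_2 \to K^{+}(V_P)$ induce a unique map out of $E_P$ provided they agree after restriction along $g_E$ and $f_E$. To check this compatibility I would use functoriality of $K^{+}$ together with the morphism equations $h_1 \circ g_E = K^{+}(g_V) \circ h_0$ and $h_2 \circ f_E = K^{+}(f_V) \circ h_0$: then $K^{+}(\iota_1) \circ h_1 \circ g_E = K^{+}(\iota_1 \circ g_V) \circ h_0$ and $K^{+}(\iota_2) \circ h_2 \circ f_E = K^{+}(\iota_2 \circ f_V) \circ h_0$, and these coincide because $\iota_1 \circ g_V = \iota_2 \circ f_V$ is exactly the commutativity of the vertex pushout square. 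This yields a unique $h_P$ with $h_P \circ \kappa_i = K^{+}(\iota_i) \circ h_i$, which are precisely the equations making each coprojection $(\iota_i, \kappa_i) : H_i \to P$ a morphism in $\mathcal{H}$, so the square commutes.

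For universality, given any cocone $a : H_1 \to Q$, $b : H_2 \to Q$ with $a \circ g = b \circ f$, I would produce the mediating morphism $u : P \to Q$ again componentwise: the set-level universal properties of $V_P$ and $E_P$ give unique maps $u_V$ and $u_E$ with $u_V \circ \iota_1 = a_V$, $u_V \circ \iota_2 = b_V$, $u_E \circ \kappa_1 = a_E$, $u_E \circ \kappa_2 = b_E$. The only remaining point is to verify that $u = (u_V, u_E)$ is a hypergraph morphism, i.e. $h_Q \circ u_E = K^{+}(u_V) \circ h_P$. Since $\kappa_1, \kappa_2$ are jointly surjective, it suffices to check this after precomposing with each $\kappa_i$, where it reduces to the morphism condition for $a$ (resp.\ $b$) together with functoriality of $K^{+}$, as in the previous paragraph. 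Uniqueness of $u$ is then inherited from the uniqueness of $u_V$ and $u_E$.

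I expect the main obstacle to be conceptual rather than computational: one must resist the temptation to declare $K^{+}(V_P)$ the ``pushout of the $K^{+}(V_i)$'', since $K^{+}$ does not preserve pushouts in general. The correct move is that $K^{+}(V_P)$ merely \emph{receives} compatible maps from $K^{+}(V_1)$ and $K^{+}(V_2)$, and the entire construction of $h_P$ hinges on applying $K^{+}$ to the commuting vertex square. Once this compatibility is established, every remaining verification is a routine diagram chase, and the jointly-epic property of the edge coprojections is exactly what lets all the morphism checks be reduced to the two legs of the span.
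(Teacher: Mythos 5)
Your proof is correct, and it constructs the same object as the paper: the componentwise quotient $\bigl((V_1 \sqcup V_2)/\!\sim,\; (E_1 \sqcup E_2)/\!\sim,\; h_P\bigr)$. Where you differ is in how the two delicate points are discharged. The paper defines the structure map concretely by $[e_1] \mapsto [h_1(e_1)]$, $[e_2] \mapsto [h_2(e_2)]$ (using the identification $K^{+}(V_1 \sqcup V_2)/\!\sim \;\cong\; K^{+}\bigl((V_1 \sqcup V_2)/\!\sim\bigr)$) and then proves well-definedness by an explicit element chase through the generated equivalence relation: a chain $e = e_1 \approx e_2 \approx \cdots \approx e_n = e'$ is traced step by step, using the morphism equations to show each step relates the corresponding image tuples; the same chain argument is repeated for the mediating morphism $\gamma$. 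You instead never touch equivalence classes directly: you define $h_P$ by the universal property of the pushout $E_P$ in $\mathbf{Set}$, with the compatibility of the two legs reduced to functoriality of $K^{+}$ applied to the commuting vertex square, and you verify the mediating morphism's compatibility by precomposing with the jointly surjective coprojections. This buys you something real: all well-definedness checks (the most error-prone parts of the paper's proof) come for free from universal properties, and the argument would transfer verbatim to any functor $K^{+}$ in place of the Kleene plus. What the paper's concrete version buys in exchange is an explicit description of when two edges (or vertices) become identified, which it then reuses in the later proposition identifying pushout complements with cut graphs. One small contrast worth noting: the paper's proof hypothesizes that $f$ is a monomorphism (pushouts along monos being all that adhesivity requires), but, like yours, never really uses it; your argument establishes the stated proposition in full generality.
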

\begin{center}
\begin{tikzcd}
H \arrow[r, "f"] \arrow[d, "g"']       & H_1 \arrow[d, "p"] \arrow[rdd, "p'", bend left]         &    \\
H_2 \arrow[r, "q"'] \arrow[rrd, "q'"', bend right] & H_1 \sqcup H_2 \arrow[rd, "\gamma" description, dashed] &    \\
                                                   &                                                         & H'
\end{tikzcd}
\end{center}

\begin{proof}
    Let $H_1 = (V_1, E_1, h_1)$ and $H_2 = (V_2,E_2, h_2)$, $H = (V,E,h)$, and $f : H \rightarrow H_1$ is a monomorphism. We claim that the pushout of these is $(V_1 \sqcup V_2 \slash \sim, E_1 \sqcup E_2 \slash \sim, H)$ where $f_V(v) \approx g_V(v)$ and $f_E(e) \approx g_E(e)$, and $\sim$ is the equivalence relation generated by $\approx$ (so the smallest equivalence relation containing $\approx$). 
    $H : E_1 \sqcup E_2 \slash \sim \rightarrow K^+(V_1 \sqcup V_2) \slash \sim \cong K^+(V_1 \sqcup V_2 \slash \sim)$, where $[e_1] \mapsto [h_1(e_1)]$, and $[e_2] \mapsto [h_2(e_2)]$. We now show that the map is well-defined. If $e \sim e'$ and $e \neq e'$ (WLOG assume $e \in E_1)$, then by definition, we must have that $e = e_1 \approx e_2 \approx ... \approx e_n = e'$ where $n \geq 2$. Thus, $e_1 = f_E(e), e_2 = g_E(e)$ for some $e \in E$, and $h_1(e_1) =h_1(f_E(e)) = K^+(f_V)(h(e))$, $h_2(e_2) =  h_2
(g_E(e)) = K^+(g_V)(h(e))$. If $h(e) = (v_1,...,v_m)$, then $h_1(e_1) = (f_V(v_1),...,f_V(v_m))$, and $h_2(e_2) = (g_V(v_1),...,g_V(v_m))$. These are in the same equivalence class. Thus, $H(e_1) \sim H(e_2)$. Continuing like this, we will get that $H(e_1) \tilde H(e_n)$, which means that the map is well-defined.  Let $p_V : V_1 \rightarrow V_1 \sqcup V_2 \slash \sim$, $v_1 \mapsto [v_1]$ (and similarly define $p_E, q_V, q_E$ over here). $p$ is a morphism because $H(p_E(e_1)) = [h_1(e_1)] = K^+(p_V)(h_1(e_1))$ by definition. Similarly, $q$ is a morphism. Also, it is easy to see that $p \circ f = q \circ g$. Assume that $p' : H_1 \rightarrow H'$ and $q' : H_2 \rightarrow H'$ are morphisms of hypergraphs such that $p' \circ f = q' \circ g$. Then, construct $\gamma : H_1 \sqcup H_2 \rightarrow H'$, where $\gamma_V : V_1 \sqcup V_2 \slash \sim \rightarrow V'$, as $[v] \mapsto p'_{V}(v)$ if $v \in V_1$ and $[v] \mapsto q'_{V}(v)$ if $v \in  V_2$ (define $\gamma_E$ similarly). We claim that this map is well-defined. If $[v] = [v']$ and $v \neq v'$ (WLOG $v \in V)$, then $v = v_1 \approx v_2 \approx ... \approx v_n = v'$ where $n \geq 2$. Thus, $v_1 = f_V(v), v_2 = g_V(v)$ for some $v \in V$. Then, $\gamma_V([v_1]) = p'_{V}(f_V(v)) = q'_{V}(g_V(v)) = \gamma_V([v_2])$. Continuing like this, we will get that $\gamma_V([v]) = \gamma_V([v'])$ (because the diagram commutes). This is a morphism because $h'(\gamma_E([e_1])) = h'(p'_{E}(e_1)) = K^+(p'_V)(h_1(e_1))$, and this is the same as $K^+(\gamma_V)(H([e_1]))$ by definition. Also, this must be the unique such morphism since $\gamma([v_1])$ (where $v_1 \in H_1$) $ = \gamma(p(v_1)) = p'(v_1)$. 
\end{proof}
\noindent The machinery of pullbacks and pushouts can provide an alternative formulation of hypergraph rewriting. We first define the notion of a pushout complement \cite{kissingerthesis}:
\begin{defi}[Pushout complement]
    Let $I \overset{l}{\rightarrow} L \overset{m}{\rightarrow} G$ are morphisms, then $G'$ is called a \textit{pushout complement} if there exist morphisms $I \overset{m'}{\rightarrow} G$ and $G \rightarrow G'$ such that the following square is a pushout square:
    \begin{center}
        \begin{tikzcd}
I \arrow[r, "l"] \arrow[d, "m'"']              & L \arrow[d, "m"] \\
G' \arrow[r]                                   & G                \\              
\end{tikzcd}
    \end{center}
   
\end{defi} 
\begin{defi}[No-dangling-edges condition] A matching $m : L \rightarrow G$ with rewrite rule $ (L \overset{l}{\hookleftarrow} I \overset{r}{\hookrightarrow} R )$ satifies the \textit{no-dangling-edges} condition if for any $v \in V_L \setminus l(V_I)$, all edges containing $m_V(v)$ are of the form $m_E(e)$ for some $e \in E_L$.
\end{defi}
\noindent This means that all the edges containing vertices that must be deleted are inside $L$ itself. We now can relate this to pushout complements. Note that if $m_V(v)$ is contained in $m_E(e)$ where $e\ \in E_L$, then $m_V(v) \in f(m_E(e)) = K^+(m_V)(f_L(e))$ (where $G = (V,E,f)$) i.e. $v \in f_L(e)$. Thus, $e \in  E_L \setminus l(E_I)$ because $v \in V_L \setminus l(V_I)$.
\begin{prop}
    A matching $m : L \rightarrow G$ and a rewrite rule $
    (L \overset{l}{\hookleftarrow} I \overset{r}{\hookrightarrow} R )$ satisfies the no-dangling-edges condition iff $I \overset{l}{\hookrightarrow} L \overset{m}{\rightarrow} G$ has a pushout complement. 
\end{prop}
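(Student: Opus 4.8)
The statement is an equivalence, and the natural candidate for the pushout complement is the cut graph $\tilde G$ obtained from $G$ by the deletion procedure described above (with $H = G$): first remove the vertices $m_V(V_L \setminus l(V_I))$, then the surviving edges in $m_E(E_L \setminus l(E_I))$. My plan is to reduce both directions to the explicit description of pushouts established in the previous proposition: since $l$ is a monomorphism, the pushout of $\tilde G \overset{m'}{\leftarrow} I \overset{l}{\rightarrow} L$ is computed as $\tilde G \sqcup_I L = (V_{\tilde G} \sqcup V_L/\!\sim,\ E_{\tilde G} \sqcup E_L/\!\sim, \ldots)$, with $\sim$ generated by $l(w) \approx m'(w)$ for $w \in I$. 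The whole proposition then comes down to comparing this glued graph with $G$ through the canonically induced comparison morphism, and deciding exactly when it is an isomorphism.

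For the forward direction I would assume no-dangling and take $G' = \tilde G$, with $m' = m\vert_I$ (a morphism into $\tilde G$ by the observation that $\tilde m = m\vert_I$ is a morphism $I \to \tilde G$) together with the inclusion $\tilde G \hookrightarrow G$. These agree on $I$, so by the universal property of the explicit pushout they induce a comparison morphism $\phi : \tilde G \sqcup_I L \to G$ restricting to the inclusion on $V_{\tilde G}$ and to $m$ on $V_L$ (similarly on edges). I would then verify that $\phi$ is an isomorphism. On vertices this is bookkeeping: $V_{\tilde G} = V_G \setminus m_V(V_L \setminus l(V_I))$ and $m_V(V_L \setminus l(V_I))$ together exhaust $V_G$, giving surjectivity, while injectivity uses that $m$ is a monomorphism so that the only identifications are the intended ones along $I$. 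The crux is surjectivity on edges: an edge of $G$ deleted in the cut graph either lies in $m_E(E_L \setminus l(E_I))$ (hence is hit from $E_L$) or was discarded in the vertex-deletion step because it contains some $m_V(v)$ with $v \in V_L \setminus l(V_I)$; it is precisely the no-dangling condition that forces every such edge to be of the form $m_E(e)$, so that it too is hit from $E_L$. Hence $\phi$ is bijective on edges, an isomorphism of graphs, and the square is a pushout.

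For the converse I would assume a pushout complement $G'$ exists. Because pushouts are unique up to isomorphism, $G$ is isomorphic to the explicit glued graph $G' \sqcup_I L$ compatibly with $m$, so I may identify $m$ with the canonical injection $L \to G' \sqcup_I L$, $e \mapsto [e]$. The key observation is that for $v \in V_L \setminus l(V_I)$ the class $[v]$ contains no representative from $V_{G'}$: the generating relation only identifies $l_V(w)$ with $m'_V(w)$, so a vertex of $L$ can be glued to $G'$ only if it lies in $l(V_I)$. Consequently an edge originating in $E_{G'}$, whose vertex tuple consists entirely of classes of $G'$-vertices, cannot contain $m_V(v) = [v]$; every edge of $G$ through $m_V(v)$ must therefore come from $E_L$, i.e. be of the form $m_E(e)$. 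This is exactly the no-dangling condition.

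I expect the main obstacle to be the careful analysis of the equivalence relation $\sim$ in the explicit pushout — in particular, verifying that under the monomorphism hypotheses on $l$ and $m$ no unintended vertices or edges are merged, so that $\phi$ is genuinely injective and the identification of $G$ with $G' \sqcup_I L$ is faithful. Once this is pinned down, both the edge-surjectivity argument (forward) and the ``$[v]$ has no $G'$-representative'' argument (backward) reduce to tracking which summand each vertex and edge originates from, with the no-dangling condition serving as the single hinge that equates edge-surjectivity of $\phi$ with the absence of dangling edges.
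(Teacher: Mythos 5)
Your forward direction is essentially the paper's: both take the cut graph as the candidate complement, build the comparison morphism $\phi$ into $G$ out of the explicit pushout, and use the no-dangling condition exactly where you do, namely to show that every edge discarded at the vertex-deletion stage is of the form $m_E(e)$ and hence hit from $E_L$, which gives surjectivity of $\phi$ on edges. Where you genuinely depart from the paper is the converse. The paper argues only that, when a dangling edge exists, the \emph{cut graph} fails to be a pushout complement (the offending edge lies neither in the cut graph nor in $E_L$, so it is absent from the glued graph); to conclude that \emph{no} pushout complement exists, that argument implicitly needs uniqueness of pushout complements along monomorphisms, a fact the paper only obtains later via adhesivity of $\mathcal{H}$. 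Your version closes this gap: you start from an arbitrary pushout complement $G'$, use uniqueness of \emph{pushouts} (not of complements) to identify $G$ with $G' \sqcup_I L$ compatibly with $m$, and observe that for $v \in V_L \setminus l(V_I)$ the class $[v]$ is a singleton, since the generating relation $l(w) \approx m'(w)$ never involves $v$; hence no edge originating in $E_{G'}$ can contain $m_V(v)$, and no-dangling follows. This is self-contained and logically cleaner, at the cost of the explicit equivalence-class analysis you flag as the main obstacle (which does go through: one leg of the span, $l$, is a monomorphism, so the paper's explicit pushout construction applies, and the identification of $m$ with the canonical injection $L \to G' \sqcup_I L$ is justified because the comparison isomorphism between two pushouts of the same span commutes with the coprojections).
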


\begin{proof}
    Assume that the no-dangling edges condition is satisfied. We will show that the cut graph $G_c$ is the pushout complement. Denote $I = (V_I, E_I, f_I), L =(V_L,E_L, f_L), G= (V,E,f)$.  Then, the cut graph has vertices $V \setminus  m_V(V_L  \setminus l(V_I)0$ and edges $E \setminus m_E(E_L \setminus l(E_I))$ (because there are no dangling edges, when we delete the vertices, the edges which are deleted as a consequence lie inside $m_E(E_L \setminus l(E_I))$). We have $m \circ l : I \rightarrow G_c$ as a morphism because the image of $m$ is in $G$. The pushout of arrows $l : I \rightarrow L$ and $m \circ l : I \rightarrow G_c$ has vertex set $(V_L \sqcup V \setminus  m_V(V_L  \setminus l(V_I)))  \slash \sim$ where $l_V(v_i) \approx m_V(l_V(v_i))$ for every $v_i \in V_I$. What's the equivalence relation generated by this relation? Since $l_V, m_V$ are injective, we have that the equivalence relation generated by $\approx$ satisfies $v \sim w$ and $v \neq w$ implies that $v = l_V(v_i)$, $w = m_V(l_V(v_i))$ for some unique $v_i \in V_I$. Construct a map $\phi_V : (V_L \sqcup V \setminus  m_V(V_L  \setminus l(V_I)) ) \slash \sim \rightarrow V$, taking $v \in V_L \mapsto m_V(v)$ and $v' \in V \setminus  m_V(V_L  \setminus l(V_I)) \mapsto v'$. This map is well-defined because $l_V(v_i)$ is sent to $m_V(l_V(v_i))$ by definition, and $l_V(v_i) \sim m_V(l_V(v_i))$. It is easy to see that this map is bijective. Similarly, the pushout has edge set $(E_L \sqcup E \setminus m_E(E_L \setminus l(E_I))) \slash \sim$ where $l_E(e_i) \approx m_E(l_E(e_i))$ for any $e_i \in E_I$. As is in the case of vertices, the map $\phi_E : (E_L \sqcup E \setminus m_E(E_L \setminus l(E_I))) \slash \sim \rightarrow E$ defined similarly is bijective. It is easy to see that $\phi$ is an isomorphism from the pushout to $G$, and so the cut graph is a pushout complement. If the no-dangling condition is not satisfied, then there will be an edge in $G$ that is not in the $E_L$ which will be deleted. Thus, it won't be in the cut graph. Also, since it isn't in $E_L$, it won't be in $L$. So, it won't be in the pushout, which means that the cut graph is not the pushout complement.
\end{proof}

\noindent In the context of hypergraphs (and graphs, in general), a fibre product can thought of an intersection.  $H_1$ and $H_2$ can be thought of as subgraphs of $H$, and their intersection in the subgraph is the fibre product. Similarly, a pushout of monomorphisms can be viewed as a union along a common subgraph (gluing). The monomorphisms $H \hookrightarrow H_1$ and $H \hookrightarrow H_2$ are usually inclusion maps, and $H_1 \sqcup  H_2$ is a union along the common subgraph $H$. Using the machinery of pushouts and fibre products (often called pullbacks), we can easily understand graph rewriting. Given a match $m : L \rightarrow G$ and a rewrite rule $\rho = L \overset{l}{\hookleftarrow} I \overset{r}{\hookrightarrow} R$ which satisfies the no-dangling-edge condition, we first compute the cut graph, which is the pushout complement $G' = G_c$. Then, we glue $R$ and $G'$ along $I$. We get a double-pushout diagram (both squares are pushout squares):
\begin{center}
    \begin{tikzcd}
L \arrow[d, "m"'] & I \arrow[l, "l"', hook'] \arrow[r, "r", hook] \arrow[d] & R \arrow[d] \\
G                 & G' \arrow[l] \arrow[r]                                  & H          
\end{tikzcd}
\end{center}
\noindent We call $H$ the graph \textit{production}, or the output graph. It is possible to do graph rewriting even if the no-dangling edge is not satisfied$-$we can construct the cut graph, and then glue the graph $R$. However, if we want to do rewriting in arbitrary categories, then double-pushout, or DPO, rewriting is the preferred formalism \cite{kissingerthesis}. In general, DPO rewriting can be done in $\textit{adhesive}$ categories, which are categories in which pushout complements are unique up to isomorphism. A concise introduction to adhesive categories is given in \cite{Adhesive}.
\begin{defi}[Adhesive category]
    A category $\mathbf{
    C}$ is called \textit{adhesive} if
    \begin{itemize}
        \item it has pushouts along monomorphisms
        \item it has pullbacks
        \item pushouts along monomorphisms are Van-Kampen squares
    \end{itemize}
\end{defi}
\noindent See \cite{Adhesive} for the definition of a Van-Kampen (VK) square. The uniqueness of pushout complements follows directly from the VK square condition. We can show that our category, $\mathcal{H}$, is adhesive. We first show that our category is extensive.
\begin{defi}[Extensive category]
    A category $\mathbf{C}$ is called $\textit{extensive}$ when 
    \begin{itemize}
        \item it has finite coproducts
        \item it has pullbacks along coproduct injections
        \item given a diagram where the bottom row is a coproduct, 
        \begin{center}
        \begin{tikzcd}
X \arrow[r, "m"] \arrow[d, "r"'] & Z \arrow[d, "h"'] & Y \arrow[l, "n"'] \arrow[d, "s"] \\
A \arrow[r, "i"']                & A \sqcup B        & B \arrow[l, "j"]                
\end{tikzcd}
\end{center}
        the 2 squares are pullbacks iff the top row is a coproduct
    \end{itemize}
\end{defi}
\begin{thm}$\mathcal{H}$ is an extensive category.
\end{thm}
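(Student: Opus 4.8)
The plan is to prove extensivity by transporting it from $\mathbf{Set}$ along the forgetful functor. Write $U=(U_V,U_E)\colon \mathcal{H}\to \mathbf{FinSet}\times\mathbf{Set}$ for the functor sending $(V,E,f)$ to the pair $(V,E)$ and a morphism to its vertex and edge components, where $\mathbf{FinSet}$ is the category of finite sets (vertices) and $\mathbf{Set}$ the category of all sets (edges). The strategy rests on three facts: (i) $U$ preserves finite coproducts and pullbacks; (ii) $U$ is conservative, i.e.\ reflects isomorphisms; and (iii) the target $\mathbf{FinSet}\times\mathbf{Set}$ is extensive. Granting these, extensivity of $\mathcal{H}$ follows formally, since $U$ preserving plus conservative, together with the already-established existence of pullbacks and coproducts in $\mathcal{H}$, makes $U$ reflect both pullback squares and coproduct cocones.

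First I would pin down finite coproducts in $\mathcal{H}$. The binary coproduct of $(V_1,E_1,f_1)$ and $(V_2,E_2,f_2)$ is $(V_1\sqcup V_2,\ E_1\sqcup E_2,\ f)$, where $f$ is the composite of $f_1\sqcup f_2$ with the canonical inclusion $K^{+}(V_1)\sqcup K^{+}(V_2)\hookrightarrow K^{+}(V_1\sqcup V_2)$; the universal property is checked componentwise exactly as for the pushout in the previous proposition. Together with the initial object $\emptyset$ this gives all finite coproducts, and by construction $U_V,U_E$ send them to coproducts in $\mathbf{FinSet}$ and $\mathbf{Set}$. That $U$ preserves pullbacks is immediate from the explicit pullback formula already proved, where $V_F$ and $E_F$ were exactly the set-theoretic pullbacks of the vertex and edge maps.

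Next I would establish conservativity: a morphism $\phi=(\phi_V,\phi_E)$ of $\mathcal{H}$ with $\phi_V,\phi_E$ bijective is an isomorphism. The only point is that the componentwise inverse $(\phi_V^{-1},\phi_E^{-1})$ is again a morphism, which follows by applying $K^{+}(\phi_V^{-1})$ on the left and $\phi_E^{-1}$ on the right of the commuting square for $\phi$ and using functoriality $K^{+}(\phi_V^{-1})=K^{+}(\phi_V)^{-1}$. Hence $U$ reflects isomorphisms. Because $\mathbf{FinSet}$ and $\mathbf{Set}$ are extensive and a finite product of extensive categories is extensive, the target of $U$ is extensive.

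Finally I would run the transport in both directions for the extensivity diagram with bottom row a coproduct $A\xrightarrow{i}A\sqcup B\xleftarrow{j}B$ and apex map $h\colon Z\to A\sqcup B$. If both squares are pullbacks in $\mathcal{H}$, then applying $U$ and using (i) and (iii) shows the image top row is a coproduct; comparing the $\mathcal{H}$-coproduct $X\sqcup Y$ to $Z$ via the copairing $[m,n]$, whose image under $U$ is the extensivity isomorphism, conservativity forces $[m,n]$ to be an isomorphism, so the top row is a coproduct. Conversely, if the top row is a coproduct, $U$ sends it to one; extensivity of the target makes the image squares pullbacks, and comparing $X$ (resp.\ $Y$) with the genuine $\mathcal{H}$-pullback gives a map that $U$ sends to an isomorphism, hence an isomorphism by conservativity. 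I expect the main obstacle to be bookkeeping rather than conceptual: the one genuinely error-prone spot is the coproduct's structure map, since $K^{+}$ does \emph{not} preserve coproducts (tuples in $K^{+}(V_1\sqcup V_2)$ may mix vertices from both summands), so one must use the inclusion $K^{+}(V_1)\sqcup K^{+}(V_2)\hookrightarrow K^{+}(V_1\sqcup V_2)$ rather than an isomorphism. The conservativity lemma is precisely what lets the rest of the argument avoid any further manipulation of structure maps, and a quick check confirms that finiteness of $V$ is preserved by the coproducts and pullbacks involved.
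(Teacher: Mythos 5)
Your proof is correct, but it takes a genuinely different route from the paper's. The paper argues directly at the level of elements: using the explicit pullback construction, it computes the pullback of a coproduct injection $A \to A \sqcup B$ against $h : Z \to A \sqcup B$ when the top row is a coproduct (finding its vertices to be pairs $(r(w),w)$ with $w \in X$, hence isomorphic to $X$), and conversely shows that when both squares are pullbacks, $Z$ decomposes as the disjoint union of the two fibre products. Your argument instead transports extensivity along the forgetful functor $U : \mathcal{H} \to \mathbf{FinSet}\times\mathbf{Set}$: you construct finite coproducts in $\mathcal{H}$ (correctly flagging that the structure map uses the inclusion $K^{+}(V_1)\sqcup K^{+}(V_2)\hookrightarrow K^{+}(V_1\sqcup V_2)$ rather than an isomorphism, since $K^{+}$ does not preserve coproducts), note that the paper's pullback formula is componentwise the set-theoretic one, prove $U$ is conservative, and then run the standard comparison-map arguments in both directions. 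All the individual steps check out, including the slightly delicate points: reflection of coproduct cocones via the copairing $[m,n]$ and reflection of pullback squares via the comparison map into the genuine $\mathcal{H}$-pullback, each combined with conservativity. What your approach buys is modularity and generality: the identical argument shows extensivity for any category of structured sets whose forgetful functor is conservative and preserves pullbacks and finite coproducts, and it isolates the one place where the hypergraph structure matters (the structure map of the coproduct). What the paper's approach buys is self-containedness: it needs no facts about $\mathbf{Set}$ or $\mathbf{FinSet}$ beyond what is already on the page, and avoids setting up the coproduct construction and conservativity lemma as separate ingredients. One stylistic caveat: your phrase that extensivity ``follows formally'' from preservation plus conservativity is only true because $\mathcal{H}$ is already known to have the relevant pullbacks and coproducts, so the comparison maps exist; you do use this, but it is worth stating explicitly as a hypothesis of the transport principle rather than as an afterthought.
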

\begin{proof}
    If the top row is a coproduct, we get that 
    $m$ and $n$ are inclusion maps and $Z = X \sqcup Y$, and $h = r \sqcup s$. Then the pullback of $i$ and $h$ is the graph with vertices $(v_{A}, w)$ such that $v_{A} = h(w)$. So, $w \in X$ and so $v_{A} = r(w)$, which means that the vertices are $(r(w), w)$ where $w \in X$. Repeating this with the edges, it is easy to see that the pullback is isomorphic to $X$ and by symmetry both squares are pullback squares. If both squares are pullbacks, we get the following diagram
\begin{center}
        \begin{tikzcd}
A \times_{A \sqcup B} Z \arrow[d] \arrow[r] & Z \arrow[d, "h" description] & B \times_{A \sqcup B} Z \arrow[l] \arrow[d] \\
A \arrow[r]                                 & A \sqcup B                   & B \arrow[l]                                
\end{tikzcd}
\end{center}
\noindent We have that $A \times_{A \sqcup B} Z$ has vertex set $\{(h(v_z),v_z) : h(v_z) \in A\}$ and $B \times_{A \sqcup B} Z$ has vertex set $\{(h(v_z), v_z) : h(v_z) \in B\}$. It is clear that the disjoint union of these 2 graphs is isomorphic to $Z$.
\end{proof}
\noindent We have the following lemma (\cite{Adhesive})
\begin{lemma}
    In an extensive category, pushouts along coproduct injections are VK squares.
\end{lemma}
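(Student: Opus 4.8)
The plan is to use extensivity as the engine and unwind the two Van--Kampen conditions by decomposing the relevant objects along the coproduct structure of their codomains. First I would pin down what a pushout along a coproduct injection concretely produces: given a coproduct injection $i_A : A \to A \sqcup B$ and any map $f : A \to C$, the pushout of $f$ along $i_A$ is $C \sqcup B$, with coproduct injection $i_C : C \to C \sqcup B$ as one leg and $f \sqcup \mathrm{id}_B : A \sqcup B \to C \sqcup B$ as the other. This is a one-line verification against the universal property of the coproduct, and it is the computation I will reuse at the ``primed'' level of the cube. So the square to be shown van Kampen is
\begin{center}
\begin{tikzcd}
A \arrow[r, "i_A"] \arrow[d, "f"'] & A \sqcup B \arrow[d, "f \sqcup \mathrm{id}_B"] \\
C \arrow[r, "i_C"'] & C \sqcup B
\end{tikzcd}
\end{center}

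Second, I would take an arbitrary commutative cube having this square as its bottom face, whose two faces meeting the pushout apex $A$ (the face over $i_A$ and the face over $f$) are pullbacks, and whose top face is $A' \overset{p}{\to} G$, $A' \overset{g}{\to} C'$, $G \overset{h}{\to} D$, $C' \overset{s}{\to} D$, with vertical maps $a : A' \to A$, $w : G \to A \sqcup B$, $c : C' \to C$, $d : D \to C \sqcup B$. The key move is to apply the extensivity biconditional to $w$ and $d$, whose codomains $A \sqcup B$ and $C \sqcup B$ are coproducts: this splits $G \cong G_A \sqcup G_B$ and $D \cong D_C \sqcup D_B$, where each summand is the pullback of the corresponding coproduct injection. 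Because the face over $i_A$ is assumed a pullback, $G_A \cong A'$ and the edge $p : A' \to G$ is, up to this isomorphism, the coproduct injection $G_A \hookrightarrow G$. Thus the top face is itself a square whose top edge $p$ is a coproduct injection, and the first-step formula tells me its pushout must be $C' \sqcup G_B$.

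Third, I would prove the two implications. For the direction where the front and right faces are pullbacks, I would use the pasting law for pullbacks: the right face being a pullback gives $G \cong (A \sqcup B) \times_{C \sqcup B} D$, and pasting with the coproduct-injection pullbacks yields $G_B \cong B \times_{C \sqcup B} D \cong D_B$ (since $f \sqcup \mathrm{id}_B$ is the identity on the $B$-summand, so that $B \to A \sqcup B \to C \sqcup B$ is the injection of $B$) and $G_A \cong A \times_C D_C$; combined with the front-face pullback $C' \cong D_C$ and the apex-face pullback $A' \cong A \times_C C'$, this gives $D \cong D_C \sqcup D_B \cong C' \sqcup G_B$, which is exactly the pushout of $p$ along $g$, and I would check the legs coincide. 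For the converse, knowing the top face is the pushout $C' \sqcup G_B$ with $h = g \sqcup \mathrm{id}_{G_B}$, I would read off from commutativity of the cube that $d$ carries the $C'$-summand into $C$ and the $G_B$-summand into $B$; extensivity then identifies $D_C \cong C'$ (exhibiting the front face as a pullback) and the computation $(A \sqcup B) \times_{C \sqcup B} D \cong (A \times_C C') \sqcup G_B \cong A' \sqcup G_B \cong G$ (exhibiting the right face as a pullback).

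The main obstacle I anticipate is not any single isomorphism but the bookkeeping that keeps these isomorphisms honest: I must verify that the abstract isomorphisms handed back by extensivity and by the universal properties actually coincide with the specific edges of the cube --- that $p$ really is the coproduct injection rather than merely isomorphic to one, that $d$ respects both coproduct decompositions summand by summand, and that the ``mixed'' pullback $(A \sqcup B) \times_{C \sqcup B} D$ splits as an $A$-part lying over $C$ and a $B$-part lying over $B$. All of this is governed by the extensivity biconditional (pullbacks along coproduct injections iff the top row is a coproduct) together with the pasting law for pullbacks, so once the summand-wise description of each map is nailed down, the two implications fall out symmetrically.
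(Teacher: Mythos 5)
Your proposal cannot be checked against an in-paper argument, because the paper does not prove this lemma at all: it is stated and imported directly from the cited literature on adhesive categories \cite{Adhesive}, where the proof given is essentially the one you outline. Your route --- identifying the pushout along an injection as $C \sqcup B$ with legs $i_C$ and $f \sqcup \mathrm{id}_B$, splitting $G$ and $D$ by pulling back along the coproduct injections, using the assumed back-face pullback to identify $A' \cong G_A$ with $p$ the injection, and then running the two Van Kampen implications summand by summand --- is the standard argument, and it is sound. One point needs tightening in your converse direction: you ``compute'' $(A \sqcup B) \times_{C \sqcup B} D$, but an extensive category is only guaranteed to have pullbacks along coproduct injections, and $f \sqcup \mathrm{id}_B$ is not one, so you may not presume this pullback exists and then identify it with $G$. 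Instead, the right face must be shown to be a pullback by verifying its universal property directly: given a test object $T$ with compatible maps to $A \sqcup B$ and $D$, decompose $T$ along its map to $A \sqcup B$ (these pullbacks do exist), route the $A$-part through $D_C \cong C'$ and the back-face pullback $A' \cong A \times_C C'$, route the $B$-part through $D_B \cong G_B$, and reassemble; uniqueness of the induced map $T \to G$ again follows summand-wise. Since this is precisely the bookkeeping you flag as the main obstacle, the issue is presentational rather than a substantive gap.
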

\begin{cor} $\mathcal{H}$ is an adhesive category
\end{cor}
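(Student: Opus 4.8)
The goal is to verify the three defining conditions of an adhesive category. Two of them are already in hand: $\mathcal{H}$ has pullbacks by the first Proposition, and it has pushouts along monomorphisms by the second Proposition, whose proof assumed precisely that one leg is a monomorphism. So the entire content of the Corollary is the third condition, namely that every pushout in $\mathcal{H}$ \emph{along a monomorphism} is a Van-Kampen square. I would isolate this as the single statement to be proved and organise the argument around it.

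The honest difficulty is that the preceding Lemma only delivers the VK property for pushouts along \emph{coproduct injections} $H_1 \hookrightarrow H_1 \sqcup H_2$, whereas a general subgraph inclusion $m : H \hookrightarrow H_1$ is not of this form (its complement is typically not a summand). Extensivity together with the existence of pushouts and pullbacks is genuinely weaker than adhesivity: $\mathbf{Top}$, for instance, is extensive and finitely (co)complete yet not adhesive. Hence the coproduct-injection Lemma cannot by itself close the gap, and this is the step I expect to be the main obstacle; it requires an input beyond pure extensivity.

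The cleanest way I would supply that input is to recognise $\mathcal{H}$ as a presheaf category. Writing $E = \bigsqcup_{n \geq 1} E_n$ with $E_n = f^{-1}(V^n)$, the labelling map $f$ is the same data as, for each arity $n$ and each position $1 \leq i \leq n$, an incidence function $E_n \to V$. Thus a directed multihypergraph is exactly a presheaf on the small category $\mathcal{C}$ whose objects are one symbol $\mathbb{V}$ and one symbol $\mathbb{E}_n$ for each $n \geq 1$, with $n$ distinct generating arrows $\mathbb{V} \to \mathbb{E}_n$ and no relations, morphisms of hypergraphs being precisely natural transformations. Since $\mathcal{C}$ has no composable pairs of generators, this identification is routine. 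Presheaf categories are Grothendieck toposes, and every topos is adhesive (see \cite{Adhesive}); this immediately yields the VK condition for pushouts along all monomorphisms and finishes the proof.

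Alternatively, staying closer to the explicit descriptions already computed, I would establish the VK biconditional componentwise. The forgetful functor $(V,E,f) \mapsto (V,E)$ into $\mathbf{Set} \times \mathbf{Set}$ computes pullbacks componentwise (this is what the pullback Proposition shows, using that $K^{+}$ preserves pullbacks via the tuple-wise product) and computes pushouts along monomorphisms componentwise (the pushout Proposition produces vertex set $V_1 \sqcup V_2 / \!\sim$ and edge set $E_1 \sqcup E_2 / \!\sim$, which are exactly the set-level pushouts). Given a cube whose bottom face is a pushout along a monomorphism and whose back faces are pullbacks, one projects to the vertex- and edge-cubes in $\mathbf{Set}$, invokes that $\mathbf{Set}$ is adhesive to obtain the VK biconditional there, and lifts back to $\mathcal{H}$ using that the labelling map is uniquely determined and that $K^{+}$ preserves pullbacks and monomorphisms. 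The one caveat worth a remark in either approach is the standing finiteness of $V$: the finite-vertex objects form a full subcategory closed under pullbacks, finite coproducts and pushouts along monomorphisms (vertex sets stay finite as subsets of products and as quotients of finite sets), so the VK condition verified in the ambient topos restricts to $\mathcal{H}$ unchanged.
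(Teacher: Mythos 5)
Your proof is correct, but it is genuinely different from the paper's --- and, more importantly, your second paragraph pinpoints exactly why the paper's own one-line proof does not work. The paper argues that ``every monomorphism in $\mathcal{H}$ is a coproduct injection (take the coproduct with the empty graph)'' and then invokes the extensivity Lemma. That claim is false: taking the coproduct with the empty graph only exhibits identities as coproduct injections, not an arbitrary mono. Concretely, the inclusion of the one-vertex graph $(\{1\},\emptyset)$ into the graph with vertex set $\{1,2\}$ and a single edge $e$ with $f(e)=(1,2)$ is a monomorphism, but its codomain is not isomorphic to $(\{1\},\emptyset)\sqcup B$ for any $B$, since in a coproduct every edge, together with all vertices it touches, lies entirely within one summand, whereas $e$ touches both $1$ and $2$. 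So the paper's Lemma only delivers the VK property for pushouts along coproduct injections, and --- as your $\mathbf{Top}$ example correctly illustrates --- extensivity together with (co)completeness is strictly weaker than adhesivity; the paper's proof has a genuine gap at precisely the step you flagged as the main obstacle. Both of your replacement arguments close that gap soundly. The identification of $\mathcal{H}$ with the finite-vertex part of presheaves on the category with objects $\mathbb{V}$, $\mathbb{E}_n$ ($n \geq 1$) and $n$ arrows $\mathbb{V}\to\mathbb{E}_n$ is correct (hypergraph morphisms preserve arity, so they are exactly natural transformations), presheaf categories are adhesive, and your closure remark legitimately restricts the VK condition to the full subcategory $\mathcal{H}$, because pullbacks and pushouts along monos of finite-vertex objects are again finite-vertex and therefore agree with the ambient ones. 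The componentwise route through $\mathbf{Set}\times\mathbf{Set}$ is equally valid, using that the forgetful functor preserves the relevant pullbacks and pushouts (this is what the paper's two Propositions compute) and reflects isomorphisms. In short, what you have written is not a variant of the paper's argument but a repair of it; the paper's Corollary is true, but for your reasons, not its own.
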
 
\begin{proof}
    Since every monomorphism is a coproduct injection in $\mathcal{H}$ (take the coproduct with the empty graph), and $\mathcal{H}$ is adhesive, the result follows from the above lemma.
\end{proof}
 \section{Causality}
\begin{defi}[Event]
    An event is given by a rule $\rho = L \overset{l}{\hookleftarrow} I \overset{r}{\hookrightarrow} R$ and a matching $m : L \rightarrow G$ (which is a morphism).
\end{defi}
\noindent 
It is possible for two distinct events to be applied at different locations on a graph $G$, however resulting in isomorphic output graphs. Ideally, one would want those graphs to be distinguished, for which the double-pushout rewriting approach should not be used (because pushout complement and pushouts are described only up to isomorphism). To do so, a labelling function may be used that encodes the information about the event in the output graph. This will be seen in the next section in the context of determining causal relations in $\lambda-$calculus. For a preliminary discussion on causality, however, considering the isomorphism classes of $\mathcal{H}$ is sufficient. Moreover, the following can be generalised to any adhesive category, where a canonical labelling is not known.
\begin{note}
    From here on, a monomorphism $m : L \rightarrow G$  is a matching if it obeys the no-dangling-edge condition.
\end{note}
\noindent We can define a reduction relation on $\text{sk}(\mathcal{H})$ (the skeleton of $\mathcal{H}$).
\begin{defi}[$\rightarrow_{\beta}$]
    Let $H_1, H_2 \in \text{sk}(\mathcal{H})$. Then, $H_1 \rightarrow_{\beta} H_2$ if there exists a rewrite rule $\rho = L \overset{l}{\hookleftarrow} I \overset{r}{\hookrightarrow} R$ and a matching $m:L \rightarrow H_1$ such that the output graph is isomorphic to $H_2$. 
\end{defi}
\noindent This transition can be labelled $H_1 \overset{(\rho, m)}{\rightarrow_{\beta}} H_2$. Thus, $\mathcal{H}$ are the states of a labelled transition system (see \cite{Joyal} for the definition), where we label transitions with events.  We can also define a notion of morphism between rewrite rules.
\begin{defi}[Morphism of rewrite rules]
    Let $\rho = L \overset{l}{\hookleftarrow} I \overset{r}{\hookrightarrow} R$ and $\rho' =  L' \overset{l'}{\hookleftarrow} I' \overset{r'}{\hookrightarrow} R'$ be 2 rewrite rules. Then a morphism from $\rho$ to $\rho'$ consists of morphisms $f : L \rightarrow L'$, $h: I \rightarrow I'$, $g : R \rightarrow R'$ such that the following diagram commutes:
    \begin{center}
        \begin{tikzcd}
L \arrow[d, "f"'] & I \arrow[l, "l"', hook'] \arrow[r, "r", hook] \arrow[d, "h" description] & R \arrow[d, "g"] \\
L'                & I' \arrow[l, "l'", hook'] \arrow[r, "r'"', hook]                         & R'              
\end{tikzcd}
    \end{center}
\end{defi}
\begin{defi}[$\rightarrow_\mathcal{\rho}$]
    Let $\mathcal{R}$ be the category of rewrite rules, and $\rho = L \overset{l}{\hookleftarrow} I \overset{r}{\hookrightarrow} R \in \mathcal{R}$. Then $\rightarrow_\rho$ is a relation on $\text{sk}(\mathcal{H})$ defined by $H_1 \rightarrow_\rho H_2$ if there is some matching $m : L \rightarrow H_1$ and $H_2$ is isomorphic to the output graph. 
\end{defi}
\noindent We have that $\rightarrow_{\beta} = \bigcup_{\rho \in \mathcal{R}} \rightarrow_{\rho}$. It is easy to see that if $\rho$ is isomorphic to $\rho'$ in $\mathcal{R}$, then $\rightarrow_{\rho} = \rightarrow_{\rho'}$ on $\text{sk}(\mathcal{H})$. However, the converse is not true. Let's introduce some extra notation. If $G \in \text{sk}(\mathcal{H})$ has $n$ vertices, let $F(G)$ denote the unique graph in $\text{sk}(\mathcal{H})$ with $n$ vertices and no-edges. If $f : H_{1} \rightarrow H_{2}$ is a morphism, then let $F(f) : F(H_{1}) \rightarrow F(H_{2})$ be the canonical morphism with $F(f)_{V} = f_{V}$ and $F(f)_{E}$ be the empty map (from the empty set to the empty set). Also, given a graph $G$, there is the obvious inclusion morphism from $F(G)$ to $G$, which means that there is a canonical morphism $\Tilde{f} : F(H_{1}) \rightarrow H_{2}$ for every $f : H_{1} \rightarrow H_{2}$.
\begin{prop} Let $\rho_{1} = L_{1} \overset{l_{1}}{\hookleftarrow} I_{1} \overset{r_{1}}{\hookrightarrow} R_{1} \in \mathcal{R}$, $\rho
_{2} = L_{2} \overset{l_{2}}{\hookleftarrow} I_{2} \overset{r_{2}}{\hookrightarrow} R_{2} \in \mathcal{R}$ be 2 rewrite rules. Then, if the rewrite rules $\Tilde{\rho}_{1} = L_{1} \overset{\Tilde{l}_{1}}{\hookleftarrow} F(I_{1}) \overset{\Tilde{r}_{1}}{\hookrightarrow} R_{1} $ and $\Tilde{\rho}_{2} = L_{2} \overset{\Tilde{l}_{2}}{\hookleftarrow} F(I_{2}) \overset{\Tilde{r}_{2}}{\hookrightarrow} R_{2}$ are isomorphic, $\rightarrow_{\rho_{1}} = \rightarrow_{\rho_{2}}$.
\end{prop}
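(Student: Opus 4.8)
The plan is to factor the claim through the single-rule transformation $\rho \mapsto \tilde\rho$. Concretely, I would first prove the auxiliary statement that $\rightarrow_{\rho}\;=\;\rightarrow_{\tilde\rho}$ for \emph{every} rewrite rule $\rho$, and then invoke the already-noted fact that isomorphic rules induce identical reduction relations. Granting the auxiliary statement, the proposition follows from the chain $\rightarrow_{\rho_1} = \rightarrow_{\tilde\rho_1} = \rightarrow_{\tilde\rho_2} = \rightarrow_{\rho_2}$, where the two outer equalities are the auxiliary statement applied to $\rho_1$ and $\rho_2$, and the middle equality is isomorphism-invariance applied to the hypothesis $\tilde\rho_1 \cong \tilde\rho_2$ in $\mathcal{R}$. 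So essentially all the content sits in the auxiliary statement.

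For that auxiliary statement I would first check that $\rho$ and $\tilde\rho$ admit exactly the same matchings. Since $F$ leaves the vertex set and the vertex component of every morphism unchanged, we have $V_{F(I)} = V_I$ and $\tilde{l}_V = l_V$, hence $V_L \setminus \tilde{l}_V(V_{F(I)}) = V_L \setminus l_V(V_I)$, while $L$, $m$, and $E_L$ are untouched. The no-dangling-edge condition refers only to these data, so it reads verbatim the same for $\rho$ and $\tilde\rho$; by the Proposition relating no-dangling to pushout complements, the cut graph exists for $\rho$ precisely when it exists for $\tilde\rho$, and a monomorphism $m : L \rightarrow H_1$ is an admissible matching for one iff it is for the other.

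Next, fixing such a matching $m : L \rightarrow H_1$, I would compute both productions and exhibit an isomorphism. The two cut graphs share the vertex set $V \setminus m_V(V_L \setminus l(V_I))$ and differ only in edges: the cut graph $G_c$ of $\rho$ retains the preserved edges $m_E(l(E_I))$, whereas the cut graph $\tilde{G}_c$ of $\tilde\rho$ (whose interface has no edges) deletes all of $m_E(E_L)$. After gluing, the vertex identifications coincide because $\tilde{r}_V = r_V$ and $V_{F(I)} = V_I$, so the two productions have identical vertex sets. On edges, the key observation is that in the production of $\rho$ each preserved edge $m_E(l_E(e_i))$ is glued to $r_E(e_i) \in E_R$ along the interface, while in the production of $\tilde\rho$ that edge is instead supplied fresh by $R$ as $r_E(e_i)$ (nothing is identified, as $F(I)$ has no edges). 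The assignment sending the $\rho$-class of an edge of $E_R$ to the $\tilde\rho$-class of the same edge, and fixing the edges of $E \setminus m_E(E_L)$, is thus a bijection $E_H \to E_{\tilde H}$.

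The main obstacle is verifying that this bijection is genuinely a morphism of hypergraphs, i.e. that deleting the preserved edges and recreating them from $R$ yields the same attachment data up to the common vertex identification. The care is in tracking, through the two pushouts, that a preserved edge $m_E(l_E(e_i))$ and its replacement $r_E(e_i)$ have attaching tuples that become equal after passing to the quotient vertex set; this is exactly where the interface-compatibility identities $f_L \circ l_E = K^+(l_V)\circ f_I$ and $f_R \circ r_E = K^+(r_V)\circ f_I$ enter, together with the fact that both incidence maps restrict the same host map on $E \setminus m_E(E_L)$ and agree with $f_R$ on the $R$-edges. Once this is in place, admissible matchings coincide and produce isomorphic outputs, giving $\rightarrow_{\rho} = \rightarrow_{\tilde\rho}$ and hence the proposition.
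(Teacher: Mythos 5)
Your proposal is correct, and its overall skeleton matches the paper's: both arguments reduce the proposition to showing that a rule $\rho$ and its edge-free-interface companion $\tilde{\rho}$ induce the same reduction relation, and then transport the hypothesis $\tilde{\rho}_1 \cong \tilde{\rho}_2$ through that identification. (The paper does this inline rather than as a lemma: given a matching $m$ for $\rho_2$, it shows $m \circ f$ is a matching for $\rho_1$, identifies the productions of $\rho_i$ and $\tilde{\rho}_i$, and concludes by isomorphism-invariance; you package the same content as the auxiliary statement $\rightarrow_{\rho} = \rightarrow_{\tilde{\rho}}$ plus the paper's earlier remark that isomorphic rules in $\mathcal{R}$ induce equal relations.) Where you genuinely diverge is in how the key identity ``production of $\rho$ $\cong$ production of $\tilde{\rho}$'' is established. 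You argue at the level of the set-theoretic construction: the two cut graphs share a vertex set, the edge classes of the two gluings are in bijection (the class $\{m_E(l_E(e_i)), r_E(e_i)\}$ on one side corresponding to the lone edge $r_E(e_i)$ on the other), and the attaching tuples agree after passing to the common vertex quotient, via $f_L \circ l_E = K^+(l_V) \circ f_I$ and $f_R \circ r_E = K^+(r_V) \circ f_I$. The paper instead argues categorically: the pushout complement $\tilde{G}_1$ of $F(I_1) \rightarrow L_1 \rightarrow G$ is obtained from the pushout complement $G_1$ of $I_1 \rightarrow L_1 \rightarrow G$ by deleting the interface edges, the resulting square over $F(I_1) \rightarrow I_1$ is a pushout, and pasting it with the pushout defining $H_1$ exhibits $H_1$ as the production of $\tilde{\rho}_1$ as well. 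Your route is more elementary and self-contained but is tied to the concrete category $\mathcal{H}$; the paper's pushout-composition argument is shorter (given its earlier identification of pushout complements with cut graphs) and is the form that survives in an arbitrary adhesive category, which matters for the paper's stated aim of generalizing the causal framework beyond hypergraphs. The one step you leave as a sketch---that your edge bijection respects the incidence maps---is exactly the computation indicated above and does go through, so there is no gap.
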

\begin{proof}
     ($\Leftarrow$)Let $(f,g,h) : \Tilde{\rho}_{1} \rightarrow \Tilde{\rho}_{2}$ be an isomorphism of rewrite rules. Let $G \in \text{sk}(\mathcal{H})$ such that $G \rightarrow_{\rho_{2}} H$ where $m : L_{2} \rightarrow G$ is the matching. We claim that $m \circ f: L_{1} \rightarrow G$ obeys the no-dangling-edges condition. Assume $v \in V_{L_{1}} \setminus l_{1}(V_{I_{1}})$. If $f(v) \in l_{2}(V_{I_{2}})$, then $f(v) = l_{2}(w)$ for some unique $w \in V_{I_{2}}$. Since the following diagram commutes, 
     \begin{center}
\begin{tikzcd}
L_{1} \arrow[d, "f"'] & F(I_{1}) \arrow[l, "\tilde{l}_{1}"', hook'] \arrow[d, "h"'] \\
L_{2}                 & F(I_{2}) \arrow[l, "\tilde{l}_{2}", hook']                 
\end{tikzcd}
     \end{center}
     $\Tilde{l}_{2} \circ h (h^{-1}(w)) = f(v) = f \circ \Tilde{l}_{1}(w)$. Since $f$ isomorphism, $v = \Tilde{l}_{1}(w)$ i.e. $v \in l_{1}(V_{I_{2}})$. Contradiction. So $f(v) \in V_{L_{2}} \setminus l_{2}(V_{I_{2}})$. And since $m$ obeys the no-dangling-edges condition, any edges containing $m(f(v))$ are of the form $m(e)$ for some $e \in  E_{L_{2}}$. Since $f$ is an isomorphism, $e = f(e')$ for some $e' \in E_{L_{1}}$. Thus, $m \circ f$ satisfies the no-dangling edges condition i.e. the pushout complement of $I_{1} \overset{l_{1}}{\rightarrow} L_{1} \overset{m \circ f}{\rightarrow} G$ exists, say $G_{1}$. It is easy to see, by the fact that pushout complement is equal to the cut graph, that removing the edges present in $I_{1}$ as a subgraph of $G_{1}$ will give the pushout complement of $F(I_{1}) \overset{\Tilde{l}_{1}}{\rightarrow} L_{1} \overset{m \circ f}{\rightarrow} G$, $\Tilde{G}_{1}$. In fact, the following is a pushout square:
     \begin{center}
         \begin{tikzcd}
\tilde{G}_{1} \arrow[d] & F(I_{1}) \arrow[l, "\tilde{l}_{1}"', hook'] \arrow[d] \\
G_{1}                   & I_{1} \arrow[l, hook']                               
\end{tikzcd}
     \end{center}
     \noindent This can also be derived from diagram chasing arguments, using the fact that the following diagram also commutes i.e. there is a monomorphism from $\Tilde{\rho}_{1}$ to $\rho_{1}$. 
     \begin{center}
         \begin{tikzcd}
L_{1} \arrow[d, "\text{id}"'] & F(I_{1}) \arrow[l, "\tilde{l}_{1}"', hook'] \arrow[r, "\tilde{r}_{1}"] \arrow[d, "\iota" description] & R_{1} \arrow[d, "\text{id}"] \\
L_{1}                         & I_{1} \arrow[l, "l_{1}", hook'] \arrow[r, "r_{1}"', hook]                                             & R_{1}                       
\end{tikzcd}
     \end{center}
     \noindent 
    Let $G_{1}$ be the pushout complement of $I_{1} \overset{l_{1}}{\rightarrow} L_{1} \overset{m \circ f}{\rightarrow} G$ and $H_{1}$ be the output graph when the first rewrite rule is applied i.e. it is the pushout of the pair $I_{1} \rightarrow R_{1}$ and $I_{1} \rightarrow G_{1}$. It can also be checked that the outer square is a pushout square (pushouts compose in $\mathcal{H}$). 
    \begin{center}
        \begin{tikzcd}
F(I_{1}) \arrow[r] \arrow[d] & I_{1} \arrow[r] \arrow[d] & R_{1} \arrow[d] \\
\tilde{G}_{1} \arrow[r]      & G_{1} \arrow[r]           & H_{1}          
\end{tikzcd}
    \end{center}
    Therefore, the output graph of $\tilde{\rho}_{1}$ and $\rho_{1}$ are isomorphic. Similarly, the output graph of $\tilde{\rho}_{2}$ and $\rho_{2}$ are isomorphic. Since $\tilde{\rho}_{1} \cong \tilde{\rho}_{2}$, their output graphs are isomorphic, which means that $H \cong H_{1}$, i.e. $G \rightarrow_{\rho_{1}} H$ in $\text{sk}(\mathcal{H})$. So, $\rightarrow_{\rho_{2}} \subseteq \rightarrow_{\rho_{1}}$. The other inclusion follows from symmetry.
\end{proof}
\noindent The backward direction is true for ``most" rewrite rules. Assume the number of edges in $L_{1}, L_{2}$ are finite. Since $L_{2} \rightarrow_{\rho_{2}} R_{2}$, we have $L_{2} \rightarrow_{\rho_{1}} R_{2}$ as well, implying that there is a monomorphism $m : L_{1} \rightarrow L_{2}$. By finiteness, this is an isomorphism, which means we get the following double-pushout diagram:
\begin{center}
    \begin{tikzcd}
L_{1} \arrow[d, "m"'] & I_{1} \arrow[l, "l_{1}"'] \arrow[d, "\text{id}"] \arrow[r] & R_{1} \arrow[d] \\
L_{2}                 & I_{1} \arrow[l, "m \circ l_{1}"] \arrow[r]                 & R_{1}          
\end{tikzcd}
\end{center}
Since the production must be $R_{2}$, $R_{1}$ is also isomorphic to $R_{1}$. For most rewrite rules, if we can show that there is no isomorphism from $L_{1}$ to $L_{2}$ that make the rewrite rules $\tilde{\rho}_{1}$, $\tilde{\rho}_{2}$ isomorphic (i.e. the diagram commutes), it is possible to create a matching for $L_{1}$ that isn't a matching for $L_{2}$ because of dangling edges.  
\begin{defi}[Events happening together] Let $e_1 = (L_1 \overset{l_1}{\hookleftarrow} I_1 \overset{r_1}{\hookrightarrow} R_1, m_1 : L_1 \rightarrow G)$, and $e_2 = (L_2 \overset{l_2}{\hookleftarrow} I_2 \overset{r_2}{\hookrightarrow} R_2, m_2 : L_2 \rightarrow G)$ be events happening on the same graph. Then, $e_1$ and $e_2$ can happen together if $(L_1 \times_{G} L_2) \times_{L_1} I_1 \cong (L_1 \times_{G} L_2) \times_{L_2} I_2$, i.e. there exists a unique $B$ up to isomorphism such that the following diagram commutes and the top 2 squares are each pullback squares.
    \begin{center}
        \begin{tikzcd}
I_1 \arrow[d, "l_1"]  & B \arrow[l] \arrow[r] \arrow[d]                       & I_2 \arrow[d]          \\
L_1 \arrow[rd, "m_1"] & L_1 \times_{G} L_2 \arrow[l, "p_1"'] \arrow[r, "p_2"] & L_2 \arrow[ld, "m_2"'] \\
                      & G                                                     &                       
\end{tikzcd}
    \end{center}
\end{defi}
\noindent The existence of $B$ means that $I_1\cap L_2 = I_2 \cap L_1 = B$ in $G$. This means that if some vertex/edge is removed in one rule, then it cannot be preserved by the other. A diagram that illustrates this definition is given in figure \ref{schematic}.
\begin{figure}
    \centering
    \includegraphics[scale = 0.25]{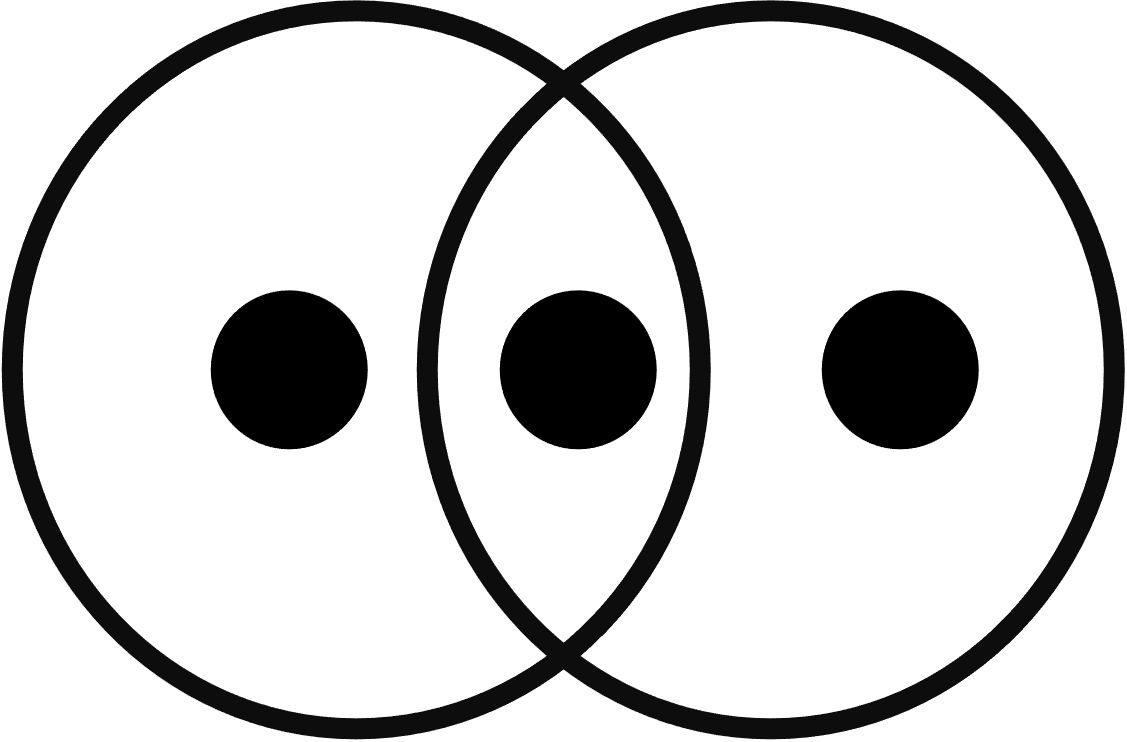}
    \caption{2 events that can happen together}
    \label{schematic}
\end{figure}
The application areas of the rules might intersect, and the shaded areas represent the parts of the graph that are preserved by the rule (the interface graphs). We can create the combined event $L = L_1 \sqcup_{L_1 \times_{G} L_2} L_2 \hookleftarrow I = I_1 \sqcup_{B} I_2 \hookrightarrow R  = R_1 \sqcup_B R_2$ with the unique match $m$ that makes the following diagram commute:
\begin{center}
   \begin{tikzcd}
L_1 \times_{G} L_2 \arrow[r, "p_2"] \arrow[d, "p_1"'] & L_2 \arrow[d] \arrow[rdd, "m_2", bend left] &   \\
L_1 \arrow[r] \arrow[rrd, "m_1"', bend right]         & L \arrow[rd, dashed]                        &   \\
                                                      &                                             & G
\end{tikzcd}
\end{center}
 Note that this notion of concurrency in graph rewriting is different from the one usually in the literature surrounding algebraic graph transformation theory \cite{Adhesive}:
\begin{defi}[Parallel-independent events]
   Let $e_1 = (L_1 \overset{l_1}{\hookleftarrow} I_1 \overset{r_1}{\hookrightarrow} R_1, m_1 : L_1 \rightarrow G)$, and $e_2 = (L_2 \overset{l_2}{\hookleftarrow} I_2 \overset{r_2}{\hookrightarrow} R_2, m_2 : L_2 \rightarrow G)$ be events happening on the same graph. Then, they are \textit{parallel-independent} if there exist morphisms $r : L_{1} \rightarrow G_{2}$, $s : L_{2} \rightarrow G_{1}$ such that the following diagram commutes:
   \begin{center}
       \begin{tikzcd}
R_{1} \arrow[d] & I_{1} \arrow[l] \arrow[d] \arrow[r] & L_{1} \arrow[rd] \arrow[rrrd, "r"] &   & L_{2} \arrow[ld] \arrow[llld, "s"'] & I_{2} \arrow[l] \arrow[d] \arrow[r] & R_{2} \arrow[d] \\
H_{1}           & G_{1} \arrow[l] \arrow[rr]          &                                    & G &                                     & G_{2} \arrow[ll] \arrow[r]          & H_{2}          
\end{tikzcd}
   \end{center}
\end{defi}
\noindent This says that there is no interference between the events i.e. there are no vertices that are deleted by the one but used by the other. Thus, 2 parallel-independent events can happen sequentially. However, if $e_{1}$ and $e_{2}$ \textit{can happen together} (by our definition) and there is a non-empty subgraph that is deleted by both, then they cannot occur sequentially! In the terminology used by Winskel \cite{EventStructures}, these events are not consistent because they cannot occur in the same history. However, at the same time, there is a degree of compatibility between the events that it is possible for them to occur simultaneously. This indicates that a more detailed analysis of the algebraic structure of graph rewriting is needed. 
\begin{note} 
    If event $e_1$ can happen with $e_2$, and $e_2$ can happen with $e_3$, then it doesn't imply that $e_1$ can happen with $e_3$. 
\end{note}
\begin{defi}[$n$ events happening together]
    A collection of $n$ events $e_i = (L_i \overset{l_i}{\hookleftarrow} I_i \overset{r_i}{\hookrightarrow} R_i, m_i : L \rightarrow G)$ can all happen together if they can happen together pairwise i.e. $e_i$ is concurrent with $e_j$ for any $i \neq j$ distinct.
\end{defi}
\noindent Since finite limits exist (there is a terminal object and pullbacks exist), let $L$ depict the limit as shown in the commutative diagram below:
\begin{center}
\begin{tikzcd}
                &                    & L \arrow[ld] \arrow[d] \arrow[rd] \arrow[lld] \arrow[rrd] &                    &                 \\
... \arrow[rrd] & L_{i-1} \arrow[rd] & L_i \arrow[d]                                             & L_{i+1} \arrow[ld] & ... \arrow[lld] \\
                &                    & G                                                         &                    &                
\end{tikzcd}
\end{center}
Consider the pullback square 
\begin{center}
    \begin{tikzcd}
A_i \arrow[r] \arrow[d] & L \arrow[d] \\
I_i \arrow[r, "l_i"']   & L_i        
\end{tikzcd}
\end{center}

\noindent where $A_i$ can be thought of as $I
_i \cap L$, which is the subgraph that is preserved in rule $i$, and is used by the other rules.  If these events can happen together, we would want that $A_i$ would be the same as $A_j$ for distinct $i,j$ because if not, there will be vertices/edges deleted by one rule that are preserved by the other. Consider the pullback $L_i \times_{L} L_j$. We have a unique morphism $L \rightarrow L_i \times_{L} L_{j} = L_i \cap 
 L_j$. We get the following commutative diagram
\begin{center}
    \begin{tikzcd}
               &                                                                             & B_{ij} \arrow[rd] \arrow[lldd, bend right] &                                      \\
               & B_{ij}\cap L \arrow[ru] \arrow[rd] \arrow[ld] \arrow[d, dashed, bend right] &                                            & L_i \cap L_j \arrow[lldd, bend left] \\
I_i \arrow[rd] & A_i \arrow[l] \arrow[r] \arrow[u, dashed, bend right]                       & L \arrow[ru] \arrow[ld]                    &                                      \\
               & L_i                                                                         &                                            &                           
\end{tikzcd}
\end{center}
where $B_{ij} = I_i \cap L_j$, which is the same as $I_i \cap L_j$ by assumption. Here, $B_{ij} \cap L$ is the pullback of $B_{ij} \rightarrow L_i \cap L_j$. We get that $B_{ij} \cap L \rightarrow I_i$, which is the composition of $B_{ij} \cap L \rightarrow B_{ij}$ and $B_{ij} \rightarrow I_i$, and $B_{ij} \cap L \rightarrow L$. Then by the universal property of pullbacks, there is a unique $B_{ij} \cap L \rightarrow A_i$. There is a unique $A_i \rightarrow B_{ij}$ since $B_{ij}$ is a pullback, and we have $A_i \rightarrow L$. Thus, there is a unique $A_i \rightarrow B_{ij} \cap L$. We can show via universal properties that the composition of $A_i \rightarrow B_{ij} \cap L$  Thus, $A_i \cong B_{ij} \cap L$. Similarly, we will get that $A_j \cong B_{ij} \cap L$. So $A_i \cong A_j$. We call $e_1 = (L_1 \overset{l_1}{\hookleftarrow} I_1 \overset{r_1}{\hookrightarrow}, m_1 : L_1 \rightarrow G)$, $e_2 = (L_2 \overset{l_2}{\hookleftarrow} I_2 \overset{r_2}{\hookrightarrow}, m_2 : L_2 \rightarrow \Tilde{G})$ to be \textit{successive} if $\tilde{G}$ is the production of $e_{1}$. We can now define what it means for 2 events to be causally related:
\begin{defi}[$\leq_{C}$]
    Let $e_1 \leq e_2$ as above. Then, $e_1 \leq_{\mathcal{C}} e_2$ if the pullback of $L_2 \overset{m_2}{\rightarrow} \Tilde{G}$ and $G' \rightarrow \Tilde{G}$ is not $L_{2}$, where $G'$ is the pushout complement of $I_{1} \overset{l_{1}} {\hookrightarrow} L_{1} \overset{m_{1}}{\rightarrow} G$. 
\end{defi}
\noindent If the pullback is $L_{2}$, then it means that the image of $m_{2}$ is in the pushout complement $G'$, and therefore in $G$, which means that the second event could have occurred before. If the pullback is not $L_{2}$, then it means that the second event uses \textbf{new} vertices/edges produced by the first as input. Note that if 2 successive events are causally disconnected, then there is a morphism $m_2' : L_2 \rightarrow G'$ such that the following is a pullback square
\begin{center}
    \begin{tikzcd}
L_{2} \arrow[d, "m_2'"'] \arrow[r, "\text{id}"] & L_2 \arrow[d, "m_2"] \\
G' \arrow[r, "\iota"']                          & G                   
\end{tikzcd}
\end{center}
\noindent Since $G' \overset{\iota}{\rightarrow} G$ is a monomorphism, we have a matching $\iota \circ m_2' : L_2 \rightarrow G$. However, it is not necessary that these events can happen together. This is because although event 1 does not delete anything in $L_2$ (which is why $L_2$ is in the pushout complement), it may very well be the case that event 2 deletes part of $L_1$. If they can occur together, then the order of events may be switched. This is called the \textit{local church-rosser} property in algebraic graph transformation theory. 
\section[Causality in lambda]{Causality in $\lambda-$calculus}
\subsection{Overview of previous work}
The notion of causality is well-studied in both theoretical computer science and theoretical physics (most prominently in Einstein's relativity). Glynn Winskel's paper titled ``Event Structures" \cite{EventStructures} set the stage for the study of causality in various models of computation. Assuming that one knows the set of events $E$ that occurred during a computation, the causal relations between them is traditionally thought of as a partial order on $E$ satisfying reflexivity, transitivity, and antisymmetry ($a \leq b, b \leq a \Rightarrow a = b)$. We instead don't require reflexivity (on the grounds that an event can't ``cause" itself) and modify the antisymmetric condition:
\begin{defi}[Causal Set]
    A causal set is a pair $(X, \leq_{\mathcal{C}})$ where $X$ is the set of \textit{events} and $\leq_{\mathcal{C}}$ is a relation on $X$ satisfying:
    \begin{itemize}
        \item $(\forall x,y,z) $ $ x \leq_{\mathcal{C}} y, y \leq_{\mathcal{C}} zz$ $ \Rightarrow x \leq_{\mathcal{C}} z$
        \item $(\forall x,y) $ $x \leq_{\mathcal{C}} y \Rightarrow y \not \leq_{\mathcal{C}} x$
    \end{itemize}
\end{defi}
\noindent From the above, it is clear that a causal relation on a set cannot be reflexive. The second relation explicitly says that if $x$ causes $y$, then $y$ cannot cause $x$, which seems to be true in most physical and computer systems (we will discuss the notion of causality in physics and its relation to the one we use in this paper later on). Winskel expanded on the definition of a causal set to distinguish between concurrent and causally disconnected events. Namely, two concurrent events must be causally disconnected, but not necessarily the other way around. For instance, there can exist events $A$ and $B$ which may use some common ``resource" i.e. both cannot occur at the same time, but neither \textit{causes} the other. In the case of hypergraph rewriting, for example, 2 events that can happen together but have a non-empty overlap of interface graphs are causally disconnected, but they cannot occur sequentially because one deletes some vertices/edges that are in the matching of the other event. So they are not concurrent. In \cite{Joyal}, concurrency is defined as the following:
\begin{defi}[Concurrent events]
    Two events $e,e'$ are concurrent if they are not causally related (i.e. $e \not \leq_{\mathcal{C}} e'$ and $e' \not \leq_{\mathcal{C}} e)$, and they are \textit{consistent} i.e. they can happen in the same computation.
\end{defi}
\noindent It is easy to see from this definition that if $a \overset{\beta}{\rightarrow} b \overset{\beta}{\rightarrow} c$ are successive transitions, then they are causally disconnected iff they are concurrent (because they are already consistent). Winskel 
\cite{EventStructures} used the notion of an ``event structure'' as a framework to dealing with events, their causal relationships, and concurrency. We first look at the notion of a prime event structure \cite{Winskel2}
\begin{defi}[Prime event structure]
    A prime event structure is a structure $E = (E, \#, \leq)$ consisting of a set $E$ of events which are partially ordered by $\leq$, the \textit{causal dependency relation}, and a binary, symmetric, irreflexive relation $\# \subseteq E \times E$, the \textit{conflict} relation, satisfying:
    \begin{itemize}
        \item $\lceil e \rceil = \{e' : e' \leq e\}$ is finite
        \item $e \# e', e' \leq e'' \implies e \# e''$ 
    \end{itemize}
\end{defi}
\noindent The first condition is called the \textbf{axiom of finite causes}, which states that any event occurs after a finite number of steps ($\beta-$reductions), starting from some initial state. The conflict relation specifies which events cannot occur in the same computational thread. Note that if $e_1 \leq e_2$ and $e_1 \# e_2$, then by symmetry of $\#$, $e_2 \# e_1$, and by the second axiom, $e_1 \# e_1$, which is a contradiction. This makes sense as $e_1 \leq e_2$ implies that $e_1$ and $e_2$ can occur in the same computational process, which is why we can make a statement about their causal dependency. 
\color{black}
\begin{defi}[Event Structure]
    An \textit{Event Structure} is a triple $(E, \text{Con}, \vdash)$ where
    \begin{enumerate}
        \item $E$ is a set of events
        \item Con is a non-empty subset of Fin$E$ (the set of finite subsets of $E$), called the consistency predicate, which satisfies 
        \begin{equation*}
            X \in \text{Con}, Y \subseteq X \Rightarrow Y \in \text{Con}
        \end{equation*}
        \item $\vdash \subseteq \text{Con} \times E$ is the \textit{enabling relation} which satisfies $X \vdash e$, $X \subseteq Y \in \text{Con} \Rightarrow Y \vdash e$
    \end{enumerate}
\end{defi}
\noindent The second condition states that if a finite set $X$ of events can all occur in the same computational history, then a subset $Y$ of those events can also occur in the same history, which is an obvious statement. The requirement that all elements of Con are finite subsets is motivated from the axiom of finite causes. The enabling relation is thought of as a replacement for the causal dependency relation. 
\color{black}
\newline \\
\noindent The notion of causality is also well-studied in theoretical physics. A brief summary is provided in \cite{Prakash}. In Einstein's theory of relativity, points in spacetime are called ``events" and 2 events, $a = (t_a, x_a)$ and $b = (t_b, x_b)$, are causally related if an observer located at $x_a$ can send a signal at time $t_a$ towards the observer at $x_b$ so that it reaches $x_b$ before $t_b$. The maximum speed of any signal is the speed of light. Thus, if $t_b > t_a$, then $a \leq_{\mathcal{C}} b$ if $\frac{|x_b - x_a|}{t_b - t_a} \leq c$. The idea is that a signal may potentially contain information that can ``influence" event $b$, which means there might be a causal relationship between them. It can easily be checked that spacetime (any, flat or curved) satisfies the axioms of a causal set. If 2 events have a large separation in space but a small one in time such one cannot influence the other by means of a signal, then they are called ``spacelike separated" (because they are more separated in space than in time). 
\color{black}
\newline \\
\noindent We would like to discuss the notion of events and their causal relationships in the context of $\lambda-$calculus. The syntax is defined as usual by $M = x | \lambda x. M | (M N)$, the last two being \textit{abstraction} and \textit{application}. The production rules for $\lambda-$calculus that define the $\beta-$reduction relation on $\lambda-$expressions are the following:
\begin{center}

\begin{tabular}{cc}
    \begin{bprooftree}
    \AxiomC{}
    \UnaryInfC{$(\lambda x.M)N \rightarrow_{\beta} M[N/x]$}
  \end{bprooftree} &
  \begin{bprooftree}
    \AxiomC{$M \rightarrow_{\beta} M'$}
    \UnaryInfC{$\lambda x.M \rightarrow_{\beta} \lambda x.M'$}
  \end{bprooftree} \\[2em]
  \begin{bprooftree}
      \AxiomC{$M \rightarrow M'$}
      \UnaryInfC{$(M N) \rightarrow_{\beta} (M' N)$}
  \end{bprooftree} &
    \begin{bprooftree}
      \AxiomC{$N \rightarrow N'$}
      \UnaryInfC{$(M N) \rightarrow_{\beta} (M N')$}
  \end{bprooftree}
\end{tabular}

\end{center}
There are many excellent references to learn more about the $\lambda-$calculus, and the reader is encouraged to look at them. 
\subsection[good]{\textit{Good} $\lambda-$expressions
}
\noindent Let's try to understand causality in the context of $\lambda-$calculus. Assume we have a reduction $(\lambda x.(x) a)(\lambda y.y) \rightarrow_{\beta} (\lambda y.y)a \rightarrow_\beta a$. Considering these $\beta-$reductions as ``events", the statement that the first event \textit{causes} the second makes sense. This is because in the first expression, the application $(x)a$ cannot be reduced as $x$ is not a $\lambda-$abstraction. The first event puts $(\lambda y.y)$ in the place of $x$ and allows the inner reduction to occur. According to the grammar of $\lambda-$calculus, every reduction of an expression $e$ is an application of an abstraction $\lambda x. M$ to some argument $e'$ (where $e'$ is any $\lambda-$expression), where $(\lambda x. M)e' \subseteq e$. In a rough sense, to determine the causal relationships between events, one must keep track of every subexpression of the form $(e e')$ during an evaluation ($e$ is typically called the function/abstraction and $e'$ the argument) Then, one can identify the $\beta-$reductions that are responsible for $e$ eventually being replaced by $\lambda-$abstraction. How can this be accomplished?   
\\
\newline
\noindent We can add metadata to the $\lambda-$expression enumerating all possible applications. For instance, in the expression $(\lambda x.x \otimes_{1} a) \otimes_{2} (\lambda y.y)$, $\otimes_{1}$ refers to the application $(x) a$ and $\otimes_{2}$ refers to the entire $\lambda-$expression. The reduction sequence until termination is $(\lambda x.(x) \otimes_{1} a) \otimes_{2} (\lambda y.y)$ $\rightarrow$ $(\lambda y.y) \otimes_{1} a$ $\rightarrow$ $a$. The first and second $\beta-$reductions can be labelled $e_2$ and $e_1$ respectively, where $e_2 \leq_{\mathcal{C}} e_1$. It seems that the same reasoning may be used to determine whether 2 successive events are causally related. However, determining causal relationships between successive events is not sufficient to determine those between events separated by intermediate steps. An example is given in figure $\ref{non-timelike}$.  
\begin{figure}
    \centering
    \includegraphics[scale = 0.6]{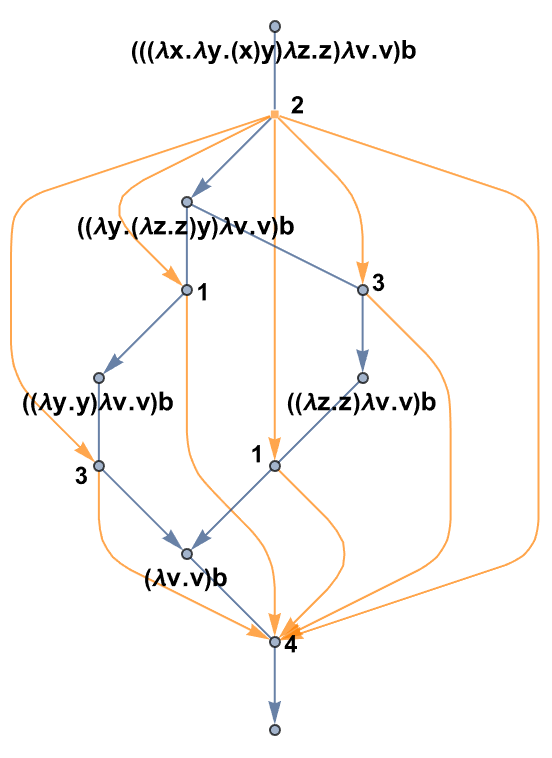}
    \caption{The causal graph for the expression $(((\lambda x.\lambda y. (x) y)\lambda z.z)\lambda v.v)b$}
    \label{non-timelike}
\end{figure}
\\
\newline
\noindent In the figure, the $\lambda-$expression is labelled as $(((\lambda x.\lambda y. (x) \otimes_{1} y) \otimes_{2}\lambda z.z) \otimes_{3} \lambda v.v) \otimes_{4} b$. In the left-most path of the computation, event 1 is causally related to event 4 despite not being successive. Moroever, taking transitive closures of causal relations between successive events does not yield that $1 \leq_{\mathcal{C}} 4$, which is true  because the reduction of the sub-expression $(\lambda z.z)y \rightarrow_{\beta} y$ is necessary to arrive at a state where event $4$ can be applied. We can reason about this by looking at the second path where events $1$ and $4$ are successive events and they are causally related. This assumes that causal relationships are preserved in different paths, whereas \textit{a priori} there is no reason why the definition of causality in a computational system should not depend on the path taken. Asking whether 2 events are causally related only makes sense if they lie in the same path. Thus, studying other paths to determine causal relationships in one implies the existence of an ``event labeling" function, as we have above in figure $\ref{non-timelike}$. However, such a function might not exist. For instance, if we have a lambda expression $(\lambda x.(x \otimes_{1} x)) (\lambda y.y \otimes_{2} \lambda y.y)$, the events in p1 cannot be mapped bijectively the the events in the p2:
\begin{equation} \label{badpath1}
(\lambda x.(x \otimes_{1} x)) ((\lambda y.y) \otimes_{2} \lambda z.z) \rightarrow_{\beta} (\lambda x.(x \otimes_{1} x)) (\lambda z.z) \rightarrow_{\beta} (\lambda z.z) \otimes_{1} \lambda z.z \tag{p1}
\end{equation}
\begin{gather*} \label{badpath2}
 (\lambda x.(x \otimes_{1} x)) ((\lambda y.y) \otimes_{2} \lambda z.z) \rightarrow_{\beta} ((\lambda y.y) \otimes_{2} \lambda z.z) \otimes_{1} ((\lambda y.y) \otimes_{2'} \lambda z.z) \rightarrow_{\beta} \\
 (\lambda z.z) \otimes_{1} ((\lambda y.y) \otimes_{2'} \lambda z.z) \rightarrow_{\beta} (\lambda z. z) \otimes_{1} (\lambda z.z) \tag{p2}
\end{gather*}
\noindent  Since $x$ occurs twice, event $2$ is duplicated in the p2, where the label $2'$ denotes the duplicated event. In p1, the argument $(\lambda y.y) \otimes \lambda z.z$ is evaluated before being substituted, preventing duplication of events.
It is clear that event $1$ can be identified in both paths; however, event $2$ in p1 can be identified with either event $2$ or event $2'$ in p2. Thus, there is no canonical way to label events.  Reducing the arguments before substituting them is termed \textit{applicative} order reduction, differing from the opposite strategy of substituting them before, called \textit{normal} order reduction. If the number of free occurrences of a variable $x$ in $M$ exceeds 1, then the normal order reduction of an expression $(\lambda x. M)N$ will take more steps than its applicative order reduction. If, however, for every subexpression of the form $(\lambda x. M) N$, $x$ occurs at most once in $M$, we will show that every possible path will consist of the same events but evaluated in a different order, permitting the construction of an event labeling function. Let $\Lambda$ be a set of labels. Then, we can define the set of \textit{good}-$\lambda$ expressions, denoted $G(\Lambda)$, and $L : G(\Lambda) \rightarrow \text{Fin}(\Lambda)$ recursively:
\begin{defi}[$G(\Lambda)$] If $x$ is a variable, then $x \in G(\Lambda)$ and $L(x) = \emptyset$. If $M \in G(\Lambda)$ and the number of free occurrences of $x$ in $M$ is at most 1, then $\lambda x. M \in G(\Lambda)$ and $L(\lambda x. M) = L(M)$. If $M, N \in G(\Lambda)$ where $L(M) \cap L(N) = \emptyset$ and $l \in \Lambda \setminus (L(M) \cup L(N))$, then $M \otimes_{l} N \in G(\Lambda)$ and $L(M \otimes_{l} N) = L(M) \cup L(N) \cup \{l\}$. 
\end{defi}
\noindent We will take $\Lambda = \mathbb{N}$ for ease of notation. The production rules are identical to those of arbitrary $\lambda-$expressions.
\begin{center}
    \begin{tabular}{cc}
    \begin{bprooftree}
    \AxiomC{}
    \UnaryInfC{$(\lambda x.M) \otimes_{n} N \rightarrow_{\beta} M[N/x]$}
  \end{bprooftree} &
  \begin{bprooftree}
    \AxiomC{$M \rightarrow_{\beta} M'$}
    \UnaryInfC{$\lambda x.M \rightarrow_{\beta} \lambda x.M'$}
  \end{bprooftree} \\[2em]
  \begin{bprooftree}
      \AxiomC{$M \rightarrow M'$}
      \UnaryInfC{$(M \otimes_{n} N) \rightarrow_{\beta} (M' \otimes_{n} N)$}
  \end{bprooftree} &
    \begin{bprooftree}
      \AxiomC{$N \rightarrow N'$}
      \UnaryInfC{$(M \otimes_{n} N) \rightarrow_{\beta} (M \otimes_{n} N')$}
  \end{bprooftree}
\end{tabular}
\end{center}
\noindent It is evident that good lambda expressions are closed under $\beta-$reduction i.e. if $M \in G(\Lambda)$ and $M \rightarrow_{\beta} N$, then $N \in G(\Lambda)$ as well. We define the notion of \textit{length} for a good $\lambda-$expression as the number of events in it, denoted by $l : G(\Lambda) \rightarrow \mathbb{Z}_{\geq 0}$,$ e \mapsto |L(e)|$. A lambda expression is said to be in $\beta-$normal form if it cannot be reduced further.
\begin{defi}[Terminating relation]
    A relation $\rightarrow$ on a set is terminating if there is no infinite descending chain $a_0 \rightarrow a_1 \rightarrow a_2 \rightarrow ..$
\end{defi}
\noindent By the production rules, $M \rightarrow_{\beta} N \Rightarrow l(N) = l(M) + 1 > l(M)$, and because $(\mathbb{Z}_{\geq 0}, >)$ is terminating, $\rightarrow_{\beta}$ is terminating. 
(more formally, $l$ is a \textit{monotone} map \cite{TermsAllThat}). 
\begin{defi}[Diamond Property]
    A relation $\rightarrow$ on a set $A$ satisfies the 
$\textit{diamond}$ property if for any $a,b,c \in A$ such that $a \rightarrow b$, $a \rightarrow c \Rightarrow (\exists d \in A) b \rightarrow d, c\rightarrow d$.  
\end{defi}
\begin{prop}
    $\rightarrow_{\beta}$ on $G(\Lambda)$ satisfies the diamond property.
\end{prop}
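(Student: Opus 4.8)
The plan is to track each $\beta$-step by the unique redex it contracts. A redex in a good expression is a subterm of the form $(\lambda x.P)\otimes_n Q$, and by the defining constraints of $G(\Lambda)$ (at each application $M\otimes_l N$ one demands $L(M)\cap L(N)=\emptyset$ and $l$ fresh) all the labels occurring in a good expression are pairwise distinct. Hence a redex is named unambiguously by its label, and the notion of the \emph{residual} of a redex after contracting another is well defined. The structural fact that drives everything is goodness itself: in any redex $(\lambda x.P)\otimes_n Q$ the variable $x$ has at most one free occurrence in $P$, so $P[Q/x]$ contains at most one copy of $Q$, and therefore at most one copy of any given subredex of $Q$. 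This \emph{no-duplication} property is exactly what fails for unrestricted $\beta$-reduction (where a single contraction can multiply a redex sitting in the argument) and is what will let us close the diamond in one step on each side.

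Given $a\rightarrow_{\beta} b$ contracting a redex $R$ with label $m$ and $a\rightarrow_{\beta} c$ contracting a redex $S$ with label $k$, I would argue by the relative position of $R$ and $S$. If $m=k$ then $b=c$ and we are done. If $R$ and $S$ are disjoint (neither is a subterm of the other), then contracting $R$ leaves $S$ intact as a subterm of $b$ and contracting $S$ leaves $R$ intact in $c$; contracting the surviving redex in each yields the same term $d$, since the two contractions act at disjoint positions and hence commute, giving $b\rightarrow_{\beta} d$ and $c\rightarrow_{\beta} d$.

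The substantive case is when one redex is nested in the other, say $S$ lies properly inside $R=(\lambda x.P)\otimes_m Q$; since $S$ is itself an application of an abstraction and $S\neq R$, it must sit inside the body $P$ or inside the argument $Q$. Here I would invoke two commutation lemmas. First, if $P\rightarrow_{\beta} P'$ by contracting $S\subseteq P$, then $P[Q/x]\rightarrow_{\beta} P'[Q/x]$ by contracting the residual of $S$; so from $b=P[Q/x]$ and $c=(\lambda x.P')\otimes_m Q$ both reductions meet at $d=P'[Q/x]$. Second, if $Q\rightarrow_{\beta} Q'$ by contracting $S\subseteq Q$, then because $x$ occurs at most once in $P$ the term $P[Q/x]$ contains \emph{at most one} copy of $Q$; when that copy is present, contracting its single residual gives $P[Q/x]\rightarrow_{\beta} P[Q'/x]$, and from $c=(\lambda x.P)\otimes_m Q'$ one contracts $R$ to reach the same $d=P[Q'/x]$. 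In the degenerate subcase where $x$ does not occur in $P$, the argument is erased, $b=P[Q/x]=P=P[Q'/x]$, and $c\rightarrow_{\beta} P=b$, so $b$ and $c$ already meet in at most one step; this is the only place where the closing reduction on one side is the identity.

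I expect the main obstacle to be the two commutation/substitution lemmas together with the residual count, and in particular making precise that a redex inside the argument $Q$ has at most one residual after contracting $R$ — this is the single point where goodness is indispensable, since for general $\lambda$-terms the same diamond is false. Care is also needed with bound variables: to guarantee that contraction commutes with substitution and that the residual of $S$ remains a redex, I would work up to $\alpha$-equivalence under Barendregt's variable convention so that no capture occurs when $Q$ is substituted. As a consistency check, the label-uniqueness invariant is preserved throughout — each contraction deletes one label and, by no-duplication, copies the argument's labels at most once — which matches the earlier observation that $G(\Lambda)$ is closed under $\rightarrow_{\beta}$.
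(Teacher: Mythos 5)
Your decomposition is essentially the paper's own proof reorganized: the paper argues by structural induction, and its three explicitly treated cases (root redex versus a reduction in the body, root redex versus a reduction in the argument, body versus argument) are precisely your nested-in-body, nested-in-argument, and disjoint cases, with goodness invoked at exactly the same single point, namely the argument case, where the contracted argument redex has at most one residual after substitution. Your residual/position bookkeeping and the paper's induction carry the same mathematical content.

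The real issue is the erasure subcase you flag at the end, and it deserves more than the parenthetical treatment you give it. Under the paper's definition of the diamond property (there exists $d$ with $b \rightarrow d$ \emph{and} $c \rightarrow d$, one genuine step on each side, with no reflexive closure allowed), that subcase is not a degenerate instance of the proof but a counterexample to the statement itself. Concretely, $a = (\lambda x. v) \otimes_{1} ((\lambda y. y) \otimes_{2} w)$ is good, because ``at most one free occurrence'' permits zero. Contracting redex $1$ erases the argument, giving $b = v$, a normal form; contracting redex $2$ gives $c = (\lambda x. v) \otimes_{1} w$. Then $c \rightarrow_{\beta} b$, but $b$ admits no step at all, so no $d$ exists. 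Your proof correctly detects this and then silently downgrades to the diamond property of the \emph{reflexive closure}, which is strictly weaker than what is claimed. For what it is worth, the paper's proof hides the same hole rather than flagging it: it asserts $M[N/x] \rightarrow_{\beta} M[N'/x]$ ``because there are no duplicates,'' which is false when $x$ has zero free occurrences, since then $M[N/x] = M = M[N'/x]$ and there is no step. The repair is either to require \emph{exactly} one free occurrence of $x$ in the abstraction clause of $G(\Lambda)$ (linearity rather than affinity), after which both your argument and the paper's close every diamond in one step as required, or to restate the proposition for the reflexive closure --- at the cost of the paper's later claims that all paths in $\mathcal{M}(A)$ have the same length $l(A) - l(B)$ and the same event set, both of which the example above also refutes (one path to normal form performs event $2$, the other never does).
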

\begin{proof}
      This can be proved by structural induction on the good $\lambda-$expressions. Assume $(\lambda x. M) \otimes_{n} N \rightarrow_{\beta} (\lambda x. M') \otimes_{n} N$. We have the diamond $(\lambda x.M') \otimes_{n} N \rightarrow_{\beta} M'[N/x]$ and $M[N/x] \rightarrow_{\beta} M'[N/x]$. Assume $(\lambda x. M) \otimes_{n} N \rightarrow_{\beta} (\lambda x. M) \otimes_{n} N'$. Since the number of free occurrences of $x$ in $M$ is at most 1, $M[N/x] \rightarrow_{\beta} M[N'/x]$ (because there are no duplicates), closing the diamond with the other transition $(\lambda x. M) \otimes_{n} N' \rightarrow M[N'/x]$. Another case to check is $(\lambda x. M) \otimes_{n} N \rightarrow_{\beta} (\lambda x. M') \otimes_{n} N$ and $(\lambda x. M) \otimes_{n} N \rightarrow_{\beta} (\lambda x. M) \otimes_{n} N'$. This obviously closes to form a diamond where both expressions can be reduced to $(\lambda x. M') \otimes_{n} N'$. All the other cases follow by structural induction. 
 \end{proof}
\begin{defi}[Event]
If $A \rightarrow_{\beta} B$, then by the production rules, the reduction is of the form $(\lambda x.M) \otimes_{n} N \rightarrow_{\beta} M[N/x]$ where $(\lambda x.M) \otimes_{n} N $ is a subexpression of $A$. The event of this transition is denoted $(A,B,n) \in G(\Lambda) \times G(\Lambda) \times \Lambda$. The event may be denoted as $A \rightarrow_{\beta, n} B$. Since we can recover the label $n$ from $A,B$, the event might also just be labelled $(A,B)$. 
\end{defi}
\noindent Using the above terminology, we can define whether 2 successive events are causally related or not. If $e_1 = A \rightarrow_{\beta, n} B$ and $e_2 = B \rightarrow_{\beta, m} C$, then $e_1 \leq_{\mathcal{C},2} e_2$ if in $A$, the application subexpression with the label $m$ (we know $m \in L(A)$) is of the form $e \otimes_{m} e'$ where $e$ is not a $\lambda-$abstraction. In $B$, $e$ must be a $\lambda-$abstraction. The fact that the diamond property is obeyed in the reduction of good $\lambda-$expressions gives us the following result:
\begin{prop}
    Assume that $e_1$ and $e_2$ are causally disconnected where $e_1 = A \rightarrow_{\beta,n} B$ and $e_2 = B\rightarrow_{\beta, m} C$. Then, there is some $D$ such that $A \rightarrow_{\beta,m} D$ and $D \rightarrow_{\beta, n} C$. 
\end{prop}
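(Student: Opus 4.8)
The plan is to recognise this proposition as a label-tracking refinement of the diamond property proved above, combined with an unpacking of what \emph{causally disconnected} forces at the level of subexpressions. First I would translate the hypothesis. Because $e_2 = B \rightarrow_{\beta,m} C$ is a genuine reduction, the application labelled $m$ in $B$ has the form $e \otimes_m e'$ with $e$ a $\lambda$-abstraction (it is the head of the redex reduced by $e_2$). By the definition of $\leq_{\mathcal{C},2}$, the relation $e_1 \leq_{\mathcal{C},2} e_2$ holds exactly when this same $e$ \emph{fails} to be a $\lambda$-abstraction in $A$. Hence causal disconnection says precisely that the application labelled $m$ is \emph{already} a redex in $A$, so there is a legitimate event $A \rightarrow_{\beta,m} D$. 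I would also record two bookkeeping facts that goodness supplies: $n \neq m$ (the $n$-redex is consumed by $e_1$, so $n \notin L(B)$ while $m \in L(B)$), and a $\beta$-step neither duplicates nor merges labels, so every label names at most one application at each stage and ``reduce the redex labelled $k$'' is deterministic.

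With this in hand, $A$ carries two distinct redexes, at labels $n$ and $m$, and two one-step reductions $A \rightarrow_{\beta,n} B$ and $A \rightarrow_{\beta,m} D$ emanate from it. I would then invoke the diamond property, but in a form that tracks which label closes each branch: the common reduct is reached from $B$ by reducing $m$ and from $D$ by reducing $n$. Concretely this is a case analysis on the relative position of the two redex occurrences in the syntax tree of $A$ --- they are either disjoint, or one is nested inside the body or inside the argument of the other. In the disjoint case the two steps plainly commute. In the nested cases the content is commutation of an inner $\beta$-step with the substitution $[N/x]$ performed by the outer step; here goodness is essential, since $x$ occurs at most once in the body, so the argument is neither copied nor (given that $m \in L(B)$ forces $x$ to actually occur when the $m$-redex sits inside $N$) silently discarded. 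This is exactly the computation already used to close the diamond in the preceding proposition, now merely annotated with labels.

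Finally I would conclude by determinism. The labelled diamond yields a term $C'$ with $B \rightarrow_{\beta,m} C'$ and $D \rightarrow_{\beta,n} C'$. Since labels identify redexes uniquely and reduction by a fixed label is deterministic, the step $B \rightarrow_{\beta,m} C'$ must coincide with the given step $B \rightarrow_{\beta,m} C$, whence $C' = C$. Therefore $A \rightarrow_{\beta,m} D$ and $D \rightarrow_{\beta,n} C$, which is the claim (in particular the $n$-redex survives in $D$, as the diamond closing guarantees).

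I expect the main obstacle to be the nested case of the label-tracked diamond, specifically verifying that reducing the inner redex commutes with the outer substitution and that the residual redex retains its original label. The subtlety is entirely localised to the interaction between substitution and reduction, and it is precisely there that the \emph{good} hypothesis (each bound variable free at most once in its body) does the work: it rules out the duplication phenomenon exhibited by $(\lambda x.(x \otimes_1 x))(\dots)$ discussed earlier, which is exactly what would otherwise make the label correspondence ill-defined.
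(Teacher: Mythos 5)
Your proposal is correct and follows essentially the same route as the paper: unpack causal disconnection via $\leq_{\mathcal{C},2}$ to see that the application labelled $m$ is already a redex in $A$ (giving $A \rightarrow_{\beta,m} D$), then close the diamond and identify the closing edges by label bookkeeping. The only cosmetic difference is that the paper pins down the final edge $D \rightarrow_{\beta} C$ as event $n$ by observing $n \in L(D)$ but $n \notin L(C)$, whereas you reach the same conclusion by tracking labels through the diamond and invoking determinism of reduction at a fixed label; both hinge on the same goodness-based non-duplication facts you spell out.
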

\begin{proof}
    By definition, the application with label $m$ in $A$ is of the form $(\lambda x. M) \otimes_{m} N$, which means event $m$ can occur without $n$. Thus, we have a reduction $A \rightarrow_{\beta,m} D$ where the subexpression $(\lambda x. M) \otimes_{m} N$ is evaluated. By the diamond property, there must be a transition $D \rightarrow_{\beta} C$. Since $n \in L(D)$ but $n \notin L(C)$, the transition from $D$ to $C$ must be the event $n$. 
\end{proof}
\noindent We now define the notion of a multiway system, which consists of all possible paths of evaluation starting from a single $\lambda-$expression. Multiway systems are ubiquitous in the Wolfram Physics Project \cite{WPP}. 
\begin{defi}[Multiway System]
    Let $A$ be a good $\lambda-$expression. Then, the multiway system of $A$, denoted $\mathcal{M}(A)$, is a graph with vertices $B$ such that $A \rightarrow_{\beta}^* B$ (the reflexive transitive closure), and there is an edge between $B$ and $C$ if $B \rightarrow_{\beta} C$.
\end{defi}
\noindent Notice that given an arbitrary good $\lambda$-expression $A$, there are only a finite number of distinct $A_1,...,A_n$ such that $A \rightarrow_{\beta} A_i$ (i.e. $\rightarrow_{\beta}$ is finitely branching), where $n \leq l(A)$. Since $\rightarrow_{\beta}$ is finitely branching and terminating, $\mathcal{M}(A)$ is a finite graph. It is well-known that if a relation satisfies the diamond property, then so does its transitive closure (this fact is used to prove the church-rosser property). Thus, $\rightarrow_{\beta}^+$ (the transitive closure) satisfies the diamond property. Thus, for any $A \in G(\Lambda)$, there is a unique $B$ in $\beta-$normal form such that $A \rightarrow_{\beta}^+ B$ i.e. all paths in $\mathcal{M}(A)$ converge towards a unique $\lambda-$expression $B$. Furthermore, any path must have length $l(A) - l(B)$ because each $\beta-$reduction reduces the length of a $\lambda-$expression by 1. The set of events occurring in each path is $L(A) \setminus L(B)$. Thus, the same set of events occur in every path, the only difference being the order in which they happen. The following lemma relates our local notion of causality for successive events, $\leq_{\mathcal{C},2}$, to a global statement about the multiway system.
\begin{nlemma}
    Let $e = B \rightarrow_{\beta,n} C$, $e' = C \rightarrow_{\beta, m} D$ be 2 successive events in $\mathcal{M}(A)$. Then, $e \leq_{\mathcal{C},2} e'$ iff for all paths in $\mathcal{M}(A)$, event with label $n$ occurs before the event with label $m$.
\end{nlemma}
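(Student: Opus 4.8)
The plan is to prove the two implications separately, leaning on the tools already in hand: the earlier Proposition (two causally disconnected successive events can be swapped) for one direction, and a ``blocking'' invariant for the other. Throughout I use the established facts that $\mathcal{M}(A)$ is finite with a unique $\beta$-normal form $\mathrm{nf}$, and that every maximal path has the same length and realizes the same event set $L(A)\setminus L(\mathrm{nf})$, so each of $n,m$ fires exactly once on every path and the phrase ``$n$ occurs before $m$'' is well posed (note $n,m\in L(A)\setminus L(\mathrm{nf})$ since they are consumed along $B\to_{\beta,n}C\to_{\beta,m}D$). A convenient fact I would record first: since each $\beta$-reduction removes exactly the label of the fired event (the length drops by one), the events fired along any $A\to_{\beta}^{*}E$ are exactly $L(A)\setminus L(E)$, independently of the path; in particular event $n$ has fired by the time we reach $E$ iff $n\notin L(E)$.

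For the direction ``all paths have $n$ before $m$ $\Rightarrow e\leq_{\mathcal{C},2}e'$'' I would argue by contraposition. Suppose $e$ and $e'$ are causally disconnected, i.e.\ $e\not\leq_{\mathcal{C},2}e'$. Applying the earlier Proposition to $B\to_{\beta,n}C\to_{\beta,m}D$ yields a state $D'$ with $B\to_{\beta,m}D'\to_{\beta,n}D$, a reduction in which $m$ fires before $n$. Since $B$ is a vertex of $\mathcal{M}(A)$ I may prepend a path $A\to_{\beta}^{*}B$ and append a path $D\to_{\beta}^{*}\mathrm{nf}$, producing a full path of $\mathcal{M}(A)$ on which $m$ precedes $n$, contradicting the hypothesis. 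This direction is short: it is essentially a packaging of the swap Proposition.

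The forward direction ``$e\leq_{\mathcal{C},2}e'\Rightarrow$ all paths have $n$ before $m$'' is where the real content lies. The definition of $\leq_{\mathcal{C},2}$ says that in $B$ the application labelled $m$ has the form $P\otimes_{m}Q$ with $P$ not a $\lambda$-abstraction, while in $C$ its function part has become an abstraction; in particular $m$ is \emph{not} fireable at $B$. I would first run a short structural analysis of how firing $n$ can turn $P$ into an abstraction: either $P$ is exactly the variable bound by the abstraction of redex $n$, so that $n$ substitutes an abstraction into $m$'s function position, or $P$ is itself the application labelled $n$, so that reducing $n$ leaves an abstraction in place. Goodness (at most one free occurrence of each bound variable, hence no duplication, together with unique labels) guarantees these are the only two possibilities and that every labelled application has a unique residual under any reduction. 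The crux is then the invariant: for every $E\in\mathcal{M}(A)$ with $n,m\in L(E)$, the residual of the application labelled $m$ still has a non-abstraction function part, hence $m$ cannot fire at $E$. Granting this, $m$ fires only once $n\notin L(E)$, i.e.\ only after $n$ has fired, on every path, which is exactly the claim.

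The main obstacle is establishing that invariant along \emph{arbitrary} paths from $A$, rather than only along the single reduction $B\to_{\beta,n}C$ at which the hypothesis is phrased. I expect to prove it by tracking the residual of the application labelled $m$ one step at a time and checking that no event other than $n$ can change its function part from a non-abstraction to an abstraction: a reduction strictly outside redex $m$ does not touch its head; in the variable case $P$ can be altered only by the substitution performed by $n$, since its binder is unique; and in the case $P$ is the application labelled $n$, any reduction before $n$ leaves $P$ syntactically an application. Making the residual notion precise in the substitution case, and arguing the path-independence, is the delicate part; the lever here is the first paragraph's observation that ``which events have fired'' depends only on $E$, combined with the diamond property to relate an arbitrary path to the reduction out of $B$.
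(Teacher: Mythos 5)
Your proposal is correct and follows essentially the same route as the paper: the backward direction is the identical contraposition via the earlier swap proposition (with the diamond $B \rightarrow_{\beta,m} D' \rightarrow_{\beta,n} D$ embedded into a full path of $\mathcal{M}(A)$), and the forward direction performs the paper's same two-case analysis, namely that the function part of the application labelled $m$ is either a variable bound by the abstraction of redex $n$ (with an abstraction as argument) or the application labelled $n$ itself. Your ``blocking invariant'' on residuals is simply a more explicit and careful rendering of the step the paper asserts tersely --- that in any path the function part of application $m$ becomes a $\lambda$-abstraction only once event $n$ has fired --- so it fills in detail rather than changing the argument.
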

\begin{proof}
    $(\Leftarrow)$ Assume that they are causally disconnected. Then by the proposition on the previous page, we know that there is a diamond $B \rightarrow_{\beta, m} E \rightarrow_{\beta, n} D$. This lies in another path of the multiway system, and $m$ occurs before $n$. $(\Rightarrow)$ Assume that they are causally related. This means that in $B$, the application with label $m$ is of the form $\text{expr}_{1} \otimes_{m} \text{expr}_{2}$ where $\text{expr}_{1}$ is not a $\lambda-$abstraction. By the grammar, it must be a symbol or an application. If it's a symbol $a$, then the only way for event $m$ to occur in $C$ is if $a$ is a bound variable in a $\lambda-$abstraction of the form $\lambda a. \text{expr}$, where $a \otimes_{m} \text{expr}_{2}$ is a sub-expression of $\text{expr}$. Thus, the sub-expression with application $n$ in $B$ must be $(\lambda a. \text{expr}) \otimes_{n} \lambda y. \text{expr}'$. Now, the only way for $m$ to occur in any path is for $n$ to occur because that's the only way $a$ can be replaced with a $\lambda-$expression. The other case is when $\text{expr}_{1} = \text{expr} \otimes_{k} \text{expr}'$. If $n$ occurs such that $\text{expr}_{1}$ is replaced by a $\lambda-$abstraction, it must be the case that $k = n$ and $\text{expr}$ is a $\lambda-$abstraction. And in any path, $\text{expr}_{1}$ is a $\lambda-$abstraction only when $n$ has been applied. 
\end{proof}
\noindent Now, we are in a position to define a causal structure on $\lambda-$expressions. First, we define the notion of paths in an abstract rewriting system $(\mathcal{A},\rightarrow_{\beta})$.
\begin{defi}[Paths]
    Let $A,B \in \mathcal{A}$. Then the set of paths from $A$ to $B$, denoted $\mathcal{P}(A,B)$ is the set of all sequences of the form $A \rightarrow_{\beta} S_1 \rightarrow_{\beta} ... \rightarrow_{\beta} S_{n-1} \rightarrow_{\beta} S_{n} = B$. The 
    \textit{event set} of this path is $\{(A, S_1), (S_1, S_2),...,(S_{n-1}, S_n)\}$.
    let $\mathcal{P}(\mathcal{A}) = \bigcup_{A,B \in \mathcal{A}} \mathcal{P}(A,B)$. The event set of some $p \in \mathcal{P}$ is denoted $E(p)$, and $E$ can be thought of as a map $E : \mathcal{P}(\mathcal{A}) \rightarrow \textbf{FinSet}$.
\end{defi}
\noindent $\mathcal{P}(\mathcal{A})$ has a partial order structure, where $p_1 \subseteq p_2$ if $E(p_1) \subseteq E(p_2)$. Defining the category of finite causal sets \textbf{CFin} will simplify the description of the causal relation. See \cite{Dribus} for a more in-depth discussion about the category of causal sets in the context of causal set theory in physics.
\begin{defi}[\textbf{CFin}] The objects of this set are causal sets $(X,\leq)$ where $X$ is a finite set and $\leq$ is a causal relation. A morphism $f: (X, \leq_{x}) \rightarrow (Y, \leq_{y})$ is a set map $f: X \rightarrow Y$ such that if $x_1 \leq_{x} x_2$, then $f(x_1) \leq_{y} f(x_2)$.
\end{defi}
\noindent We have the obvious forgetful functor $F : \textbf{CFin} \rightarrow \textbf{FinSet}$, taking $(X,\leq) \mapsto X$. Note that \textbf{CFin} can be endowed with a partial order, where $(X,\leq_{x}) \subseteq (Y, \leq_{y})$ if $X \subseteq Y$ and the inclusion map $\iota : X \rightarrow Y$ is a morphism of causal sets. We can now define what a $\textit{causal structure}$ on an abstract rewriting system is.
\begin{defi}[Causal structure]
    The causal structure on an ARS $(\mathcal{A}, \rightarrow_{\beta})$ is a $\mathcal{C} : \mathcal{P}(\mathcal{A}) \rightarrow \textbf{CFin}$ such that $F\circ \mathcal{C} = E$ and for any $p_1,p_2 \in \mathcal{P}(\mathcal{A})$ such that $p_1 \subseteq p_2 \Rightarrow \mathcal{C}(p_1) \subseteq \mathcal{C}(p_2)$.
\end{defi}
\noindent In other words, a causal structure on an abstract rewriting system is a causal relation on each path such that if path $p$ is contained in path $q$, then the causal relations on $q$ consists of all the causal relations on $p$. We now introduce the notion of \textit{proper time}, which will be imperative to define the causal structure on $(G(\Lambda),\rightarrow_{\beta})$. 
\begin{defi}[Proper time] Let $A \in G(\mathbb{N})$. Then, let $n,m \in L(A)$ such that for every path $p \in \mathcal{M}(A)$, the event with label $n$ occurs before the event with label $m$. Then, the \textit{proper time} between $n$ and $m$ is $\text{min}_{p \in \mathcal{M}(A)} d_p(n, m)$ where $d_{p}(n, m)$ is 1 + number of events between the event labelled $n$ and $m$ on path $p$.
\end{defi}
\noindent In (general) relativity, proper time between 2 points on a timelike curve $p$ is the time difference between those points as measured by an observer travelling along $p$. The key point is that it is a Lorentz scalar, being independent of the coordinate system used. Here, different paths through $\mathcal{M}(A)$ can be thought of as being different coordinate systems, each providing a different ordering of events. The above definition is obviously independent of the path/coordinate system because it is the minimum separation between 2 events across all paths/coordinate systems. There is a deeper motivation for this definition. In relativity, the causal structure of spacetime is independent of the coordinate system used. We should expect something similar over here. If $A \overset{e_{1}}{\rightarrow_{\beta}} A_1 \overset{e_{2}}{\rightarrow_{\beta}}...\overset{e_{n-1}}{\rightarrow_{\beta}} A_{n-1} \overset{e_{n}}{\rightarrow_{\beta}} A_{n}$ is a path and $e_{n-1}$, $e_{n}$ are causally disconnected, we can construct a new path $A \overset{e_{1}}{\rightarrow_{\beta}} A_1 \overset{e_{2}}{\rightarrow_{\beta}}...\overset{e'_{n-1}}{\rightarrow_{\beta}} A'_{n-1} \overset{e'_{n}}{\rightarrow_{\beta}} A_{n}$ where we have interchanged the order of $e_{n-1}, e_{n}$. If some event $e_{i} \leq_{\mathcal{C}} e_{n}$, for some $1 \leq i \leq n-1$, we expect that $e_{i} \leq_{\mathcal{C}} e'_{n-1}$ in the new path. This is because, in both paths, we can view $e_{n-1}$ and $e_{n}$ as happening together. Unlike in hypergraph rewriting, two successive causally disconnected events in $\lambda-$calculus are non-interfering. Thus, their order of occurrence is irrelevant to causality. This motivates the following definition:
\begin{defi}[Homotopy]
    Let $p_1 = A_0 \rightarrow_{\beta} A_{1} \rightarrow_{\beta} ... \rightarrow_{\beta} A_{n-1}\rightarrow_{\beta} A_{n}$ and $p_{2} = A_{0} \rightarrow_{\beta} A'_{1} \rightarrow_{\beta} ... \rightarrow_{\beta} A'_{n-1} \rightarrow_{\beta} A_{n}$ be paths in $\mathcal{P}(G(\mathbb{N}))$. Then, $p_1 \approx p_2$ if $(\exists i), 1 \leq i \leq n-1$ such that $A_{i} \neq A'_{i}$ and  $(\forall j \neq i)$ $A_{j} = A'_{j}$. Let $\sim$ be the smallest equivalence relation containing $\approx$. Then, $p$ and $q$ are \textit{homotopic} if $p \sim q$. 
\end{defi}
\noindent Note that if events $(A_{i-1}, A_{i})$ and  $(A_{i}, A_{i+1})$ have labels $n,m$ respectively, then events $(A_{i-1}, A'_{i})$ and $(A'_{i}, A_{i+1})$ have labels $m,n$ respectively (the events have been interchanged). We want the causal relation to be invariant under homotopies.
\begin{defi}[Homotopy invariant causal structure]
    A causal structure $\mathcal{C} : \mathcal{P}(G(\mathbb{N})) \rightarrow \textbf{CFin}$ is \textit{homotopy$-$invariant} if whenever $p_1 \approx p_2$, where $p_1 = A_0 \rightarrow_{\beta} ... \rightarrow_{\beta} A_{n-1} \rightarrow_{\beta} A_{n}$ and $p_2 = A_0 \rightarrow_{\beta} A_{1} \rightarrow_{\beta} ... \rightarrow_{\beta} A_{n-1}\rightarrow_{\beta} A_{n}$ and $A_i \neq A'_{i}$, then the map from $E(p_1)$ to $E(p_2)$ where $(A_{i-1}, A_{i}) \mapsto (A'_{i}, A'_{i+1})$, $(A_{i}, A_{i+1}) \mapsto (A'_{i-1}, A'_{i})$, and $(A_{j-1}, A_{j}) \mapsto (A'_{j-1}, A'_{j})$ for all $j \neq i, i+1$ is an isomorphism of causal sets (in \textbf{CFin}).
\end{defi}
\noindent 

\begin{nlemma} Let $\mathcal{C}$ be a homotopy-invariant causal structure on $(G(\mathbb{N}), \rightarrow_{\beta})$ agreeing with $\leq_{\mathcal{C},2}$ on paths of length 2. Let $A \in G(\mathbb{N})$ and assume $n,m \in L(A)$ be events that occur in the reduction of $A$ such that $n$ occurs before $m$ in every path. Then, the subpath $p$ in $\mathcal{M}(A)$ for which the distance between $n,m$ is minimum (i.e. the proper time path) must be of the form
$A_{1} \overset{e_1}{\rightarrow_{\beta}} A_{2} \rightarrow_{\beta}...\rightarrow_{\beta} A_{l} \overset{e_{l}}{\rightarrow_{\beta}} A_{l+1}$ ($e_{1}$ has label $n$ and $e_{l}$ has label $m$) where $e_1 \leq_{\mathcal{C}} e_{i}$ for all $i \in [2, l]$ and $e_{j} \leq_{\mathcal{C}} e_{l}$ for all $j \in [1, l-1]$.  
\end{nlemma}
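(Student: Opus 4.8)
The plan is to argue by contradiction, exploiting the proper-time minimality of the path together with homotopy invariance of $\mathcal{C}$. First I would fix notation and record the two mechanisms I rely on. Since $A$ is good, each label in $L(A)\setminus L(B)$ (with $B$ the common normal form) is consumed exactly once along any path, so an event is unambiguously named by its label throughout $\mathcal{M}(A)$, and the causal-set isomorphisms supplied by homotopy invariance let me transport the relation $\leq_{\mathcal{C}}$ between labelled events from any path to a homotopic one. The second mechanism concerns two consecutive events $e_i, e_{i+1}$: applying monotonicity of the causal structure to the length-two subpath carrying exactly $\{e_i,e_{i+1}\}$, together with its agreement with $\leq_{\mathcal{C},2}$, shows that $e_i \leq_{\mathcal{C},2} e_{i+1}$ forces $e_i \leq_{\mathcal{C}} e_{i+1}$; contrapositively, if $e_i \not\leq_{\mathcal{C}} e_{i+1}$ then $e_i,e_{i+1}$ are causally disconnected, and the swap proposition (via the diamond property) yields an adjacent, hence homotopic, path in which exactly these two events are interchanged.

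For the first assertion, that $e_1 \leq_{\mathcal{C}} e_i$ for all $i \in [2,l]$, I would suppose otherwise and let $k$ be minimal with $e_1 \not\leq_{\mathcal{C}} e_k$, so that $e_1 \leq_{\mathcal{C}} e_j$ for $2 \le j < k$. I then slide $e_k$ leftward one transposition at a time. Whenever $e_k$ sits immediately to the right of some $e_j$, I claim $e_j \not\leq_{\mathcal{C}} e_k$: otherwise transitivity of the causal set $\mathcal{C}(p)$ gives $e_1 \leq_{\mathcal{C}} e_j \leq_{\mathcal{C}} e_k$ (or $e_1 \leq_{\mathcal{C}} e_k$ directly when $j=1$), contradicting the choice of $k$. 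By the second mechanism each such transposition is a legal homotopy swap, and homotopy invariance keeps $e_1 \not\leq_{\mathcal{C}} e_k$ intact since these swaps never involve the pair $\{e_1,e_k\}$ until the final one. Passing $e_k$ all the way to the left of $e_1$ pushes the event labelled $n$ one step later while $m=e_l$ still terminates the segment, strictly decreasing $d_p(n,m)$ and contradicting proper-time minimality; when $k=l$ the same sliding pushes $m$ before $n$, contradicting instead the hypothesis that $n$ precedes $m$ on every path.

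The second assertion, that $e_j \leq_{\mathcal{C}} e_l$ for all $j \in [1,l-1]$, is the mirror image. I would take the maximal $k$ with $e_k \not\leq_{\mathcal{C}} e_l$ and slide $e_k$ rightward, checking at each step that the event immediately to its right is not above it in $\leq_{\mathcal{C}}$, for otherwise transitivity forces $e_k \leq_{\mathcal{C}} e_l$; so every transposition is again a homotopy swap. Moving $e_k$ past $m=e_l$ removes it from between $n$ and $m$ and lowers $d_p(n,m)$ (or, when $k=1$, pushes $n$ after $m$), yielding the same contradiction.

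The step I expect to be most delicate is the bookkeeping that ties these informal swaps to the hypotheses: I must verify that naming events by label is legitimate (precisely where goodness and single consumption of labels enter), that each transposition is genuinely between causally disconnected consecutive events so that the swap proposition and the homotopy-invariance hypothesis apply, and—crucially—that the persistence of $e_1 \not\leq_{\mathcal{C}} e_k$ (respectively $e_k \not\leq_{\mathcal{C}} e_l$) through the intermediate swaps is read off from the causal-set isomorphisms of homotopy invariance rather than merely assumed. Once this machinery is set up, the two minimality arguments are symmetric and short.
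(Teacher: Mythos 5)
Your proposal is correct and follows essentially the same argument as the paper: take a minimal (resp.\ maximal) offending index, use transitivity to show each adjacent pair along the way is causally disconnected, perform successive homotopy swaps justified by the agreement with $\leq_{\mathcal{C},2}$ and the diamond-property proposition, and invoke homotopy invariance to preserve the non-relation until the swap past $e_1$ (resp.\ $e_l$) contradicts proper-time minimality. Your explicit separation of the $k=l$ case (where the contradiction is with ``$n$ precedes $m$ on every path'' rather than with minimality) is a small point the paper glosses over, but it is the same route.
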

\begin{proof}
For contradiction, let $i$ be the smallest index such that $e_1 \not \leq_{\mathcal{C}} e_{i}$. If $i  = 2$, then we can interchange the order of events $e_1$ and $e_2$ (because $\leq_{\mathcal{C}}$ agrees with $\leq_{\mathcal{C},2}$ on sucessive events), getting a shorter path between the event with label $n$ and label $m$. If $i \geq 3$, then for all $k \in [2, i-1]$, $e_{1} \leq_{\mathcal{C}} e_{k}$ and $e_{k} \not \leq_{\mathcal{C}} e_{i}$ (otherwise  by transitivity $e_1 \leq_{\mathcal{C}} e_{i}$ which is not true by assumption). Thus we have a path $A_{1} \rightarrow_{\beta} ... \rightarrow_{\beta} A_{i-1} \overset{e_{i-1}}{\rightarrow_{\beta}} A_{i} \overset{e_{i}}{\rightarrow_{\beta}} A_{i+1} \rightarrow_{\beta} ... \rightarrow_{\beta} A_{l+1}$ where $e_{i-1} \not \leq_{\mathcal{C}} e_{i}$. Interchanging the order of events, we get a new path $A_{1} \rightarrow_{\beta} ... \rightarrow_{\beta} A_{i-1} \overset{e'_{i-1}}{\rightarrow_{\beta}} A'_{i} \overset{e'_{i}}{\rightarrow_{\beta}} A_{i+1} \rightarrow_{\beta} ... \rightarrow_{\beta} A_{l+1}$, where $e'_{i-1}$ has the same label as $e_{i}$ and $e'_{i}$ has the same label as $e_{i-1}$. By homotopy invariance, in the the new path, $e_{1} \leq_{\mathcal{C}} e_{k}$ for all $k \in [2,l-2]$ and $e_{1} \not \leq_{\mathcal{C}} e'_{i-1}$. Now, we repeat this procedure until $e_{1}$ and the event with the same label as $e_{i}$ are successive, in which case we can interchange the order again, getting a shorter path between the event with label $n$ and $m$, contradicting minimality. The proof of $e_{j} \leq_{\mathcal{C}} e_{l}$ for all $j \in [1,l-1]$ is exactly the same.
\end{proof}
\begin{thm} There is a unique homotopy$-$invariant causal structure $\mathcal{C} : \mathcal{P}(G(\mathbb{N})) \rightarrow \textbf{CFin}$ which agrees with $\leq_{\mathcal{C},2}$ on paths of length 2. 
\end{thm}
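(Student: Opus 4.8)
The plan is to prove existence and uniqueness separately, with the common theme that the \emph{canonical} causal relation on the events of a path $p$ from $A_0$ to $A_k$ is ``$n$ strictly precedes $m$ in every reduction path from $A_0$ to $A_k$,'' and that the bridge between this global relation and the local relation $\leq_{\mathcal{C},2}$ is supplied by the diamond property together with the proper-time lemma just proved. First I would record the structural consequence of the diamond property that makes homotopy meaningful: any two paths with the same source and the same target are homotopic. This follows by induction on path length from the diamond property (tiling the region between two co-initial, co-terminal paths by elementary squares), and, combined with the earlier observations that all maximal paths out of $A_0$ share a common target (the unique normal form $B$), length $l(A_0)-l(B)$, and event set $L(A_0)\setminus L(B)$, it shows that the maximal paths form a single homotopy class whose members are exactly the linear extensions of an underlying order on the events. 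In particular ``$n$ precedes $m$ in every path from $A_0$ to $A_k$'' is transitive, irreflexive, and asymmetric, so it defines an object of $\textbf{CFin}$ on $E(p)$.

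For existence I would set $\mathcal{C}(p):=(E(p),\le)$ with $e\le e'$ (labels $n,m$) iff $n$ precedes $m$ in every path from the source of $p$ to the target of $p$. Then $F\circ\mathcal{C}=E$ is immediate, and agreement with $\leq_{\mathcal{C},2}$ on length-$2$ paths is exactly the content of the earlier lemma characterizing $\leq_{\mathcal{C},2}$ as ``$n$ before $m$ in all paths,'' once one checks that precedence measured over paths to the normal form agrees with precedence measured over paths to the intermediate target. The substantive axiom to verify is monotonicity, $p_1\subseteq p_2\Rightarrow\mathcal{C}(p_1)\subseteq\mathcal{C}(p_2)$: since a path is determined by its chain of transitions, one checks that $p_1$ is a contiguous sub-path of $p_2$, and one must then show that precedence of $n$ before $m$ computed on the shorter interval is preserved on the longer one.

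This is the \emph{causal-invariance} step and I expect it to be the main obstacle. It amounts to proving that ``$n$ precedes $m$ in all paths'' is independent of how far the chosen source lies below the events $n,m$ and how far the chosen target lies above them; equivalently, that the underlying order on a prefix/suffix interval is the restriction of the global order. This genuinely uses confluence and the homotopy structure rather than any single reduction sequence (as the relation $1\le_{\mathcal{C}}4$ in Figure~\ref{non-timelike} shows, it cannot be read off from one path), and it is the fact that simultaneously legitimizes the definition, the agreement with $\leq_{\mathcal{C},2}$, and monotonicity.

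For uniqueness I would take any homotopy-invariant causal structure $\mathcal{C}'$ agreeing with $\leq_{\mathcal{C},2}$ on length-$2$ paths and show $\mathcal{C}'=\mathcal{C}$. One inclusion is handed to us by the proper-time lemma: if $n$ precedes $m$ in every path, then on the proper-time path the lemma yields $e_1\le_{\mathcal{C}'}e_l$, i.e.\ $n\le_{\mathcal{C}'}m$, and homotopy invariance propagates this across the single homotopy class to every maximal path, giving $\mathcal{C}\subseteq\mathcal{C}'$. For the reverse inclusion I would first show, via monotonicity together with the length-$2$ agreement, that in any such $\mathcal{C}'$ the causal order refines the path order (so $e\le_{\mathcal{C}'(p)}e'$ forces $e$ before $e'$ in $p$), and then argue that a relation $n\le_{\mathcal{C}'}m$ holding in some path must, by homotopy invariance over the single homotopy class, hold with $n$ before $m$ in \emph{every} path, which is precisely $n\le_{\mathcal{C}}m$. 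Assembling the two inclusions gives $\mathcal{C}'=\mathcal{C}$, establishing both existence and uniqueness. The delicate points throughout are (i) the base-point independence of global precedence and (ii) that the causal order refines the path order; both are exactly where the diamond property and homotopy invariance must do the real work.
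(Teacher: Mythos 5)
Your overall skeleton matches the paper's: both identify the causal structure with ``$n$ precedes $m$ in every path,'' both use the lemma characterizing $\leq_{\mathcal{C},2}$ as global precedence for the base case, and both use the proper-time lemma plus homotopy invariance to transfer relations between paths. But there are two genuine gaps. First, the step you yourself call ``the main obstacle''---that global precedence computed on an interval agrees with global precedence computed on any enclosing interval, equivalently that your candidate $\mathcal{C}$ is monotone and agrees with $\leq_{\mathcal{C},2}$---is never proved; you only describe what it would amount to. This is not a routine verification: it is precisely the content supplied by the paper's two inductions (on path length for one direction, on proper time for the other), whose engine is the shortening/bubbling argument inside the proof of the proper-time lemma. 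Deferring this step to ``confluence and the homotopy structure'' leaves the existence half of your proof unestablished, whereas the paper's characterization argument, whatever its informality, contains the inductive machinery from which interval-independence actually follows.

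Second, your uniqueness argument leans on a direction of monotonicity that the definition does not provide. $\mathcal{C}(p_1) \subseteq \mathcal{C}(p_2)$ means the inclusion $E(p_1) \hookrightarrow E(p_2)$ is a morphism in $\textbf{CFin}$: relations on the subpath are \emph{preserved} into the superpath, not \emph{reflected} back down. So you cannot show ``the causal order refines the path order'' via ``monotonicity together with the length-2 agreement,'' because that would require restricting a relation on a long path to a length-2 subpath before invoking agreement---exactly the unavailable direction. The same issue recurs when you propagate $e_1 \leq_{\mathcal{C}'} e_l$ from the proper-time path to every maximal path and then need it on the given, generally non-maximal, path $p$: pulling a relation down from a maximal extension to $p$ again needs reflection. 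What is available, and what the paper's argument actually exploits, is the contrapositive: a \emph{non}-relation on a long path descends to every subpath, hence to successive pairs, where agreement with $\leq_{\mathcal{C},2}$ licenses an interchange of the two events (via the diamond property), and homotopy invariance carries all relations across the swap; iterating this shortens the path and drives the induction. Your tiling claim (any two co-initial, co-terminal paths are homotopic) is plausible here---it needs determinism of labelled steps and persistence of fireable redexes, both implicit in the paper's earlier propositions, and the paper never needs the global statement---and with it your plan could likely be repaired; but as written, the two steps above are missing, not merely compressed.
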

\begin{proof}
    For ease of notation, let $\leq_{\mathcal{C}}$ denote a causal relation induced by $\mathcal{C}$. Let $A \rightarrow_{\beta, n} B$ and $C \rightarrow_{\beta, m} D$ such that $B \rightarrow_{\beta}^{*} C$. Let $p \in \mathcal{P}$ be the path $A \rightarrow_{\beta, n}  B \rightarrow_{\beta} ... \rightarrow_{\beta} C \rightarrow_{\beta, m} D$. We claim that $(A, B, n) \leq_{\mathcal{C}} (C,D, m)$ iff for every path $q$ in $\mathcal{M}(A)$, the event with label $n$ occurs before the event with label $m$ on $q$. $(\Rightarrow)$ Assume that $(A,B,n) \leq_{C} (C,D,m)$. We can induct on the path length. If these events are successive ($B = C$) i.e. $A \rightarrow_{\beta,n} B \rightarrow_{\beta,m} C$, then we are done by lemma 5.1. So assume that there are intermediate events between $B$ and $C$. Assume there is some intermediate event $e$ with label $l_{e}$ such that $(A,B,n) \leq_{\mathcal{C}} e \leq_{\mathcal{C}} (C,D,m)$. Then, by induction (since the length of the path between $(A,B,n)$ and $e$ is strictly smaller), in every path the event with label $n$ occurs before the event with label $l_{e}$. Similarly, by induction, in every path the event with label $l_{e}$ occurs before the event with label $m$. Thus, event $n$ occurs before event $m$. So assume there is no intermediate event $e'$ such that $(A,B,n) \leq_{\mathcal{C}} e' \leq_{\mathcal{C}} (C,D,m)$. Let $e$ be any intermediate event. Then we must have either $(A,B,n) \not \leq_{\mathcal{C}} e$ or $e \not \leq_{\mathcal{C}} (C,D,m)$. In either case, from the proof of the lemma above, we know that we can interchange the order of events such that we get a shorter path  between event $n$ and $m$ which are still causally related (because of homotopy invariance). So it follows from induction again. $(\Leftarrow)$ Assume event $n$ occurs before event $m$ for all paths. The proper time $\tau$ between $n$ and $m$ is a positive integer. We can induct on $\tau$. If $\tau = 1$, then there is some $A',B',C'$ such that $A' \rightarrow_{\beta,n} B' \rightarrow_{\beta,m} C'$, and $(A,B,n) \leq_{\mathcal{C}} (C,D,m)$ by lemma 5.1. So assume the proper time is greater than 1. Consider the proper time path $S \rightarrow_{\beta, n} S_{1} \rightarrow_{\beta} ... \rightarrow_{\beta} S_{n} \rightarrow_{\beta, m} S_{n+1}$. Let $e$ be any intermediate event $(S_i, S_{i+1})$ with label $l_{i}$ for $i \in [1,n-1]$. Then by lemma 5.2 above we know $(S_{0}, S_{1}) \leq_{\mathcal{C}} (S_i, S_{i+1}) \leq_{\mathcal{C}} (S_{n}, S_{m})$. The proper time between event $n$ and $l_{i}$ is less than $\tau$, so by induction, $n$ always occurs before $l_{i}$. Similarly, by induction, $l_{i}$ always occurs before $m$. So $n$ always occurs before $m$.
\end{proof}
\noindent The unique homotopy$-$invariant causal relation on good $\lambda-$expressions is denoted $\leq_{\mathcal{C}}$, and is defined as follows by the above theorem.
\begin{defi}[$\leq_{\mathcal{C}}$]
    Assume that $e = A \rightarrow_{\beta, n} B$, $e' = C \rightarrow_{\beta, m} D$ be 2 events such that there is a path between $B$ and $C$ i.e. $B \rightarrow_{\beta}^+ C$ (or $B = C$). Then, $e \leq_{\mathcal{C}} e'$ if, for all paths in $\mathcal{M}(A)$, the event labeled $n$ occurs before the event labeled $m$. 
\end{defi}
\noindent By demanding homotopy invariance, we have shown that there exists a unique causal structure on the good lambda expressions. The uniqueness comes from the fact that knowing the causal relation 2 successive events is enough to determine the causal relation between arbitrary events. Figure \ref{goodexample} shows the graph of causal relationships between events in the Multiway system of $(((\lambda x. \lambda y. (x) y) \lambda z.z)(\lambda u. \lambda v. v) a) b$.
\begin{figure}
    \centering
    \includegraphics[scale=0.5]{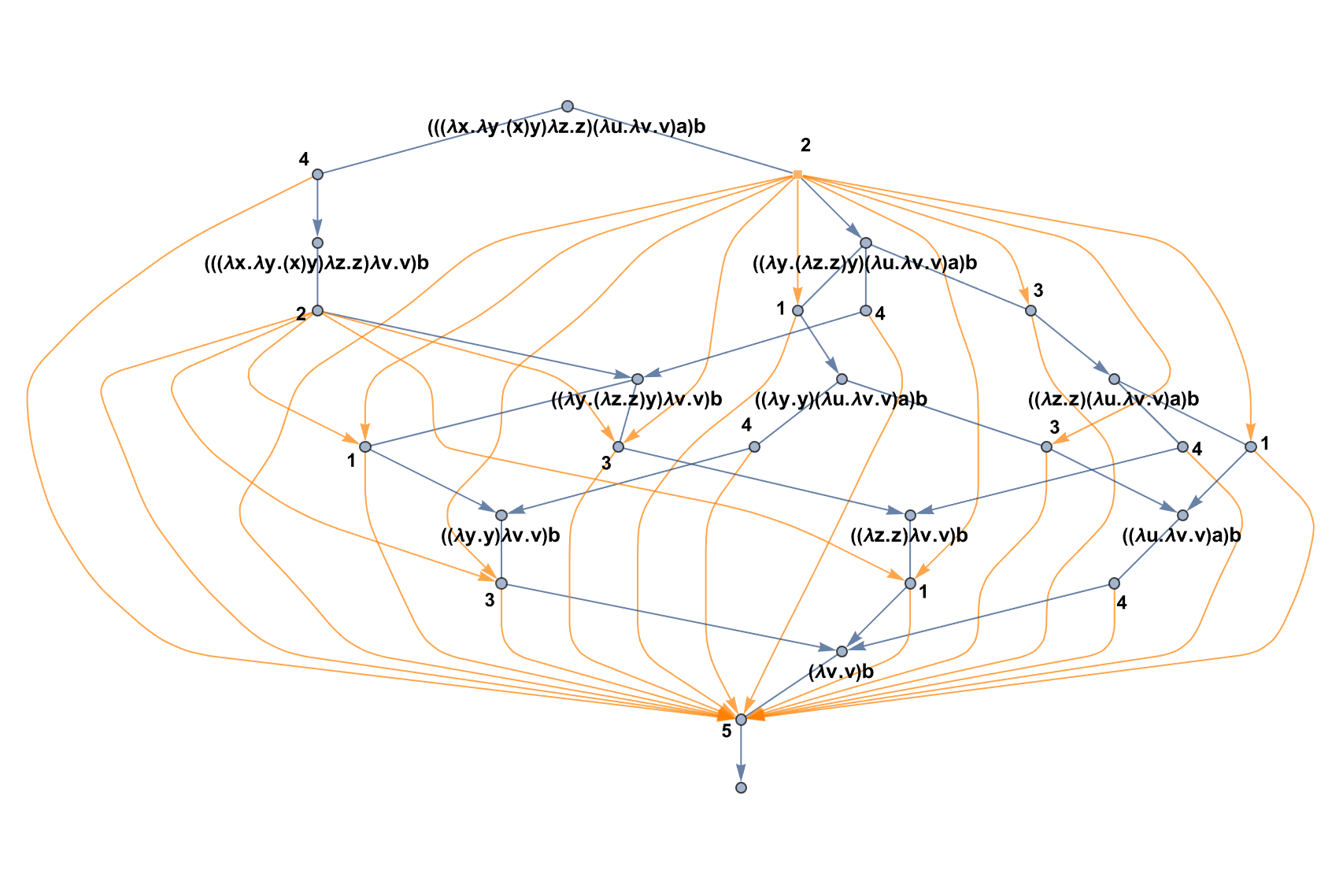}
    \caption{Causal graph for a good lambda expression}
    \label{goodexample}
\end{figure}
We now proceed to discuss causality in the full $\lambda-$calculus using the insights developed in this section.
\subsection[All]{All $\lambda-$expressions} 
Assume that we have a $\lambda-$expression $(\lambda x. M) N$ where the number of free occurrences of $x$, say $n$, in $M$ exceeds 1. Let $(\lambda x. M) N \rightarrow_{\beta} M[N / x] \rightarrow_{\beta} E$ be successive events. Assume that the second event occurs at the sub-expression $(\lambda y. A)(B)$ and $B$ is substituted in the second event.  If $(\lambda y. A)(B)$ is a sub-expression of $M[N/x]$, then we have 4 separate cases: (1) it is a sub-expression of $M$, (2) it is a sub-expression of $N$, (3) $N = B$ and $(\lambda y. A) x$ is a sub-expression of $M$, or (4) $N = (\lambda y. A)$ and $(x \, B)$ is a sub-expression of $M$ . It can be checked from the grammar of $\lambda-$calculus that there are no other cases. In cases (1),(2), and (3), the events are causally disconnected, and in (4), the events are causally related. There is a slightly more elegant way to look at this scenario. Let the free occurrences of $x$ in $M$ be numbered $1,2,..,n$, and let $M[N/x,i]$ denote the substitution of $N$ at the $i$\textit{th} free occurrence of $M$, Performing the substitution one at a time, we can get the new reduction sequence $(\lambda x. M) N \rightarrow_{\beta} (\lambda x. M[N/x,1])N \rightarrow_{\beta} (\lambda x. (M[N/x,1])[N/x,2])N \rightarrow_{\beta} ... \rightarrow_{\beta} M[N/x] \rightarrow_{\beta} E$. Assume we are in case (4), let $j$ be the number assigned to $x$ in $(x \, B)$. Then, we can reorder the order of substitutions as $(\lambda x. M)N \rightarrow_{\beta} (\lambda x. M[N/x,1])N \rightarrow_{\beta} ... \rightarrow_{\beta} (\lambda x. M[N/x,1]...[N/x,j-1])N \rightarrow_{\beta} (\lambda x. M[N/x,1]...[N/x,j-1][N/x,j+1])N \rightarrow_{\beta} ... \rightarrow_{\beta} (\lambda x. M[N/x,1]...[N/x,j-1][N/x,j+1]..[N/x,n])N \rightarrow_{\beta} M[N/x]$, where the last substitution is the $jth$ substitution. Then, in the reduction sequence $(\lambda x. M[N/x,1]...[N/x,j-1][N/x,j+1]..[N/x,n])N \rightarrow_{\beta} M[N/x] \rightarrow_{\beta} E$, since the first expression has only one free occurrence of $x$. In general, it makes sense to say that $((\lambda x. M)N, M[N/x])$ causes  $(M[N/x], E)$ if there exists a $j$ such that $((\lambda x. M[N/x,1]...[N/x,j-1][N/x,j+1]..[N/x,n])N,M[N/x])$ causes $(M[N/x], E)$. Since each substitution is causally disconnected from the other (illustrated by the fact that we can switch the order of substitutions), if we can demand homotopy invariance, then our definition reduces to $((\lambda x. M)N, M[N/x])$ causes  $(M[N/x], E)$ if any one of the events in the path $(\lambda x. M) N \rightarrow_{\beta} (\lambda x. M[N/x,1])N \rightarrow_{\beta} (\lambda x. (M[N/x,1])[N/x,2])N \rightarrow_{\beta} ... \rightarrow_{\beta} M[N/x]$ causes $(M[N/x], E)$. 
\newline
\\
We can label the $\lambda$-expressions similarly to how we did in the preceding section but with a few differences. If we have the expression $(\lambda x. (x \otimes_{1} x)) \otimes_{2} ((\lambda y. y) \otimes_{3} z) $, then the application of event 2 can yield $((\lambda y. y) \otimes_{(3,(2,1))} z)  \otimes_{1} ((\lambda y. y) \otimes_{(3,(2,2))} z)$, where the addition of $(2,1)$ to the label indicates that event 2 occurred, and the expression was substituted at the first occurrence of $x$ (similarly $(2,2)$ indicates that the expression was substituted at the second occurrence of $x$). We are annotating 
$\lambda$-expressions with labels, which are updated with information that encode the reduction event and the substitution location. The labels assigned to the free occurrences of a bound variable must be updated too because in some expression $(\lambda x. M)\otimes N$, new copies of $x$ can be created in $M$ if $M$ is reduced.  Take, for example, $(\lambda x.((\lambda y. (y \otimes_{1} y)) \otimes_{2} (x_{1} \otimes_{3} x_{2}))) \otimes_{4} M$, where we have shown labels for $x$. Substituting $M$ on the first occurrence of $x$, the event will be labelled $(4, 1)$. If we reduce the inner redex by performing events $(2,1)$ and $(2,2)$, we get
\begin{equation*}
    (\lambda x. ((x_{(1,(2,1))} \otimes_{(3,(2,1))} x_{(2,(2,1))}) \otimes_{1} (x_{(1,(2,2))} \otimes_{(3,(2,2))} x_{(2,(2,2))})) ) \otimes_{4} M
`\end{equation*}
Substituting $M$ on the first occurrence of $x$, the event will be labelled $(4, (1,(2,1)))$. If we didn't update the labels of $x$, this event will be labelled $(4,1)$, same as before. With this in mind, we now define the set of all $\lambda-$terms, denoted $\Tau(\Lambda_{E}, \Lambda_{V})$, with the set of event labels $\Lambda_{E}$, labels for variables $\Lambda_{V}$. Let $E_{L} : \Tau(\Lambda_{E}, \Lambda_{V}) \rightarrow \text{Fin}(\mathcal{E})$ (where $\mathcal{E} = \Lambda_{E} \times (\Lambda_{E} \times \Lambda_{V})^{*}$) denote the set of labels for events in a $\lambda-$expression. Let \textbf{Var} be the set of variables, and let $V_{L} : \Tau(\Lambda_{E}, \Lambda_{V}) \times \textbf{Var} \rightarrow \text{Fin}(\mathcal{V})$ (where $\mathcal{V} = \Lambda_{V} \times (\Lambda_{E} \times \Lambda_{V})^{*}$) denote the labels assigned to a particular variable in a $\lambda$-expression.
\begin{defi}[$\Tau(\Lambda_{E}, \Lambda_{V})$] 
    If $x \in \textbf{Var}$ and $a \in \mathcal{V}$, then $x_{a} \in \Tau(\Lambda_{E}, \Lambda_{V})$,  $E_{L}(x_{a}) = \emptyset$, and $V_{L}(x, x) = a$, $V_{L}(x, b)= \empty$ for $b \neq a$. If $M,N \in \Tau(\Lambda_{E}, \Lambda_{V})$ such that $M,N \in \Tau(\Lambda_{E}, \Lambda_{V})$, $E_L(M) \cap E_L(N) = V_{L}(M,x) \cap V_{L}(N,x) = \emptyset$ for all $x \in \textbf{Var}$, and $l \in \mathcal{E} \setminus (E_{L}(M) \cup E_{L}(N))$ then $M \otimes_{l} N \in \Tau(\Lambda_{E}, \Lambda_{V})$ with $E_{L}(M \otimes_{l} N) = E_{L}(M) \cup E_{L}(N) \cup \{l\}$, $V_{L}(M \otimes_{e} N, x) = V_{L}(M,x) \cup E_{L}(N,x)$. If $M \in  \Tau(\Lambda_{E}, \Lambda_{V})$, then $(\lambda x. M, n) \in \Tau(\Lambda_{E}, \Lambda_{V})$ where $n$ is the number of free occurrences of $x$ in $M$, then $V_{L}((\lambda x. M, n), x) = \emptyset$ and $V_{L}((\lambda x. M,n), y) = V_{L}(M,y)$ for $y \neq x$. 
\end{defi}
\noindent Let $E : \mathcal{E} \rightarrow \Lambda_{E}$, $H_{E} : \mathcal{E} \rightarrow (\Lambda_{E} \times \Lambda_{V})^{*}$ be the canonical projection maps. $H_{E}(e)$ is the \textit{history} of event $e$, motivated by the example presented above. The string of characters from $(\Lambda_{E} \times \Lambda_{V})^{*}$ is appended each time an event occurs. We define maps $V, H_{V}$ similarly. Let $A : \Tau(\Lambda_{E}, \Lambda_{V}) \times (\Lambda_{E} \times \Lambda_{V}) \rightarrow \Tau(\Lambda_{E}, \Lambda_{V})$ be a map that takes an expression $M$ and $(l,m)$, appending $(l,m)$ to the history of each variable and event label in $M$. Also, if $m \in V_{L}(M,x)$, then denote $M[N/x, m]$ denote the substitution of $N$ in the place of $x_{m}$.  We are now in a position to define the relation $\rightarrow_{\beta}$:
\begin{center}
\begin{tabular}{cc}
  \begin{bprooftree}
    \AxiomC{$M \rightarrow_{\beta} M'$}
    \UnaryInfC{$(\lambda  x. M, n )\rightarrow_{\beta} (\lambda x. M', n')$} 
  \end{bprooftree} &
  \begin{bprooftree}
      \AxiomC{$n \geq 2$}
      \AxiomC{$m \in V_{L}(M,x)$}
     \BinaryInfC{$(\lambda x.M,n) \otimes_{l} N \rightarrow_{\beta} (\lambda x.(M[A(N,(l,m))/x,m]),n-1) \otimes_{l} N$}
  \end{bprooftree} \\[2em]
     
  \begin{bprooftree}
      \AxiomC{$n = 1$}
      \AxiomC{$m \in V_{L}(M,x)$}
      \BinaryInfC{$(\lambda x.M,n) \otimes_{l} N \rightarrow_{\beta} M[A(N,(l,m))/x,m]$}
  \end{bprooftree}&
   \begin{bprooftree}
     \AxiomC{}
     \UnaryInfC{$(\lambda x.M,0) \otimes_{l} N \rightarrow_{\beta} M$}
  \end{bprooftree} \\[2em]
  \begin{bprooftree}
      \AxiomC{$M \rightarrow_{\beta} M'$}
      \UnaryInfC{$M \otimes_{l} N \rightarrow_{\beta} M' \otimes_{l} N$} 
  \end{bprooftree}&
  \begin{bprooftree}
      \AxiomC{$N \rightarrow_{\beta} N'$}
      \UnaryInfC{$M \otimes_{l} N \rightarrow_{\beta} M \otimes_{l} N'$} 
  \end{bprooftree}
  \end{tabular}
\end{center}
\noindent The second production rule shows the one-step substitution we explained before.
\begin{defi}[Event]
    If $A \rightarrow_{\beta} B$, then by the production rules, we know that the reduction is a substitution by the event labeled $l$ at the variable named $m$. The event of this transition is the tuple $(A,B,l,m) \in \Tau(\Lambda_{E}, \Lambda_{V}) \times \Tau(\Lambda_{E}, \Lambda_{V}) \times \Lambda_{E} \times \Lambda_{V}$. We can label the transition $A \rightarrow_{\beta, (l,m)} B$.
\end{defi}
\begin{notation}
    Assume $e_{1} = A \rightarrow_{\beta, (l,m)} B$, $e_{2} = B \rightarrow_{\beta,(l',m')} C$ be 2 events. If $l' \in E_{L}(A)$, let $\pi(l')= l'$. Otherwise, $(\exists M,N)$ and $l''$ such that $l'' \in E_{L}(N)$ and $e_{1}$ is the reduction of the subexpression $(\lambda x. M) \otimes_{l} N$ in $A$. Thus, $l' = (l'', (l,m))$. In that case, $\pi(l') = l''$. 
\end{notation}
\begin{defi}[$\leq_{\mathcal{C},2}$]
    Let $e_{1} = A \rightarrow_{\beta, (l,m)} B$, $e_{2} = B \rightarrow_{\beta,(l',m')} C$. Then, $e_{1} \leq_{\mathcal{C},2} e_{2}$ if in $A$, the application with label $\pi(l')$ is of the form $\text{expr}_{1} \otimes_{\pi(l')} \text{expr}_{2}$ where $\text{expr}_{1}$ is not a $\lambda-$abstraction. 
\end{defi}
\noindent Now that we know whether any 2 successive events are causally related, can we find a suitable notion of causality for any 2 events along a path? In the good case, we used the diamond property of the reduction relation to prove that 2 successive events are not causally related iff  their order of occurrence can be interchanged. However, it is easy to see that $\rightarrow_{\beta}$ on all $\lambda-$expressions does not satisfy the diamond property. For instance, if $n \geq 2$, then the transitions $(\lambda x. M,n)\otimes_{l} N \rightarrow_{\beta} (\lambda x. M,n)\otimes_{l} N'$ and $(\lambda x. M,n)\otimes_{l} N \rightarrow_{\beta} (\lambda x. M[N/x,m],n-1) \otimes_{l} N$ do not merge again in one step. We have $(\lambda x. M,n)\otimes_{l} N' \rightarrow_{\beta} (\lambda x. M[N'/x,m],n-1)\otimes_{l} N'$ and $(\lambda x. M[N/x,m],n-1) \otimes_{l}  N \rightarrow_{\beta} (\lambda x. M[N/x,m],n-1) \otimes_{l}  N' \rightarrow_{\beta} (\lambda x. M[N'/x,m],n-1) \otimes_{l}  N'$. So an extra step is needed. 
\newline
\\
\noindent If we have a causal structure $\mathcal{C}$ on $(\Tau(\Lambda_{E}, \Lambda_{V}), \rightarrow_{\beta})$, we can induce a causal structure on the ordinary $\lambda-$calculus. The ordinary $\lambda-$calculus is  $((\Tau(\Lambda_{E}, \Lambda_{V}),\rightarrow_{\Tilde{\beta}})$ where the production rules are:
\begin{center}
\begin{tabular}{cc}
  \begin{bprooftree}
    \AxiomC{$M \rightarrow_{\Tilde{\beta}} M'$}
    \UnaryInfC{$(\lambda  x. M, n )\rightarrow_{\Tilde{\beta}} (\lambda x. M', n')$} 
  \end{bprooftree} &
   \begin{bprooftree}
     \AxiomC{}
     \UnaryInfC{$(\lambda x.M,n) \otimes_{l} N \rightarrow_{\Tilde{\beta}} M[N/x]$}
  \end{bprooftree} \\[2em]
  \begin{bprooftree}
      \AxiomC{$M \rightarrow_{\Tilde{\beta}} M'$}
      \UnaryInfC{$M \otimes_{l} N \rightarrow_{\Tilde{\beta}} M' \otimes_{l} N$} 
  \end{bprooftree}&
  \begin{bprooftree}
      \AxiomC{$N \rightarrow_{\Tilde{\beta}} N'$}
      \UnaryInfC{$M \otimes_{l} N \rightarrow_{\Tilde{\beta}} M \otimes_{l} N'$} 
  \end{bprooftree}
  \end{tabular}
\end{center}
\noindent where $M[N/x]$ denotes the substitution of $A(N, (l,m))$ at the place of $x_m$ for all $m \in V_{L}(M,x)$. It is clear that we have $\rightarrow_{\Tilde{\beta}}^{*} \subset \rightarrow_{\beta}^{*}$. We have the following construction:
\begin{defi}[Induced causal relation]
    Let $\rightarrow_{1}, \rightarrow_{2}$ be 2 relations of a set where $\rightarrow_{2}^{*} \subset \rightarrow_{1}^{*}$. Then, we can induce a causal structure on $\mathcal{P}(\rightarrow_{2})$ as follows. Let $A_{0} \overset{e_{1}}{\rightarrow_{2}} A_{1} \rightarrow_{2} ... \rightarrow_{2}  A_{n-1} \overset{e_{n}}{\rightarrow_{2}} A_{n}$ be a path. Then, $e_{1} \leq_{\mathcal{C}} e_{n}$ if for all paths $p$ between $A_{0}$ and $A_{n}$ in $\rightarrow_{1}$, where $p = A_{00} \overset{e_{11}}\rightarrow_{1} {A_{01}}  ... \rightarrow_{1} A_{0 f(0)} \overset{e_{1f(1)}}\rightarrow_{1} A_{1} \rightarrow_{1} ... \rightarrow_{1} A_{n-1} \overset{e_{n1}}\rightarrow_{1} A_{(n-1)1} \rightarrow_{1} ... \rightarrow_{1} A_{(n-1)f(n-1)} \overset{e
_{nf(n)}}{\rightarrow_{1}} A_{n}$, there exist a $j \in \{1, ..., f(1)\}$ such that $(\forall k \in [1, f(n)]) e_{1j}\leq_{\mathcal{C}} e_{nk}$
\end{defi}
\noindent It can be seen that the induced causal structure on $(\Tau(\Lambda_{E},\Lambda_{V}), \rightarrow_{\Tilde{\beta}})$ from a causal structure on $(\Tau(\Lambda_{E},\Lambda_{V}), \rightarrow_{\beta})$ is the one discussed at the beginning of this section. However, an algorithm to determine causal relationships between 2 arbitrary events on a path $p \in \mathcal{P}(\rightarrow_{\beta})$ is not known to the author. Though it is suspected that this problem is not undecidable.
\section{Future work}

One of the key open problems is to develop an algorithm that can determine causal relationships within the full $\lambda$-calculus or to establish its undecidability. This problem likely remains decidable, as it pertains to the syntactic properties of $\lambda$-calculus rather than its semantics. Another potential research direction involves extending the work on hypergraph rewriting to identify causal relations between non-successive events. A similar approach to the one outlined in Section 4 could be applicable here. Specifically, a labeling function might be formalized as a functor $F: \mathcal{H} \rightarrow \textbf{Set}$, where the morphisms in $\mathcal{H}$ represent graph productions derived from $\rightarrow_{\beta}$. Given that the definition of causality between successive events holds for rewriting in any adhesive category, it may not be far-fetched to propose a more general theory of causal structures on paths purely in categorical terms.
\printbibliography

\end{document}